\newtheorem{theorem}{\textbf{Theorem}}
\begin{document}
%
\title{Cooperative Secret Key Generation from Phase Estimation in Narrowband Fading Channels}

\author{Qian~Wang,~\IEEEmembership{Student~Member,~IEEE,}
        Kaihe~Xu,~\IEEEmembership{Student~Member,~IEEE,}
        and~Kui~Ren,~\IEEEmembership{Senior~Member,~IEEE} 
\IEEEcompsocitemizethanks{
\IEEEcompsocthanksitem The research of Kui Ren is partially supported by the US National
Science Foundation under grants CNS-0831963 and CNS-1117811.
\protect
\IEEEcompsocthanksitem Qian Wang, Kaihe Xu, and Kui Ren are with the Department
of Electrical and Computer Engineering, Illinois Institute of Technology, 3301 South Dearborn Street, Suite 103 Siegel Hall, Chicago,
IL, 60616 USA. \protect
E-mail: \{qian, kai, kren\}@ece.iit.edu.
}}

%
%

\markboth{Journal of \LaTeX\ Class Files,~Vol.~6, No.~1, January~2007}%
{Shell \MakeLowercase{\textit{et al.}}: Bare Demo of IEEEtran.cls for Journals}
%



\maketitle
\thispagestyle{empty}
\pagestyle{empty}

\begin{abstract}
By exploiting multipath fading channels as a source of common randomness, physical layer (PHY) based key generation protocols allow two terminals with correlated observations to generate secret keys with information-theoretical security.
The state of the art, however, still suffers from major limitations,\textit{ e.g.}, low key generation rate, lower entropy of key bits and a high reliance on node mobility. In this paper, a novel cooperative key generation protocol is developed to facilitate high-rate key generation in narrowband fading channels, where two keying nodes extract the phase randomness of the fading channel with the aid of relay node(s).
For the first time, we explicitly consider the effect of estimation methods on the extraction of secret key bits from the underlying fading channels and focus on a popular statistical method--maximum likelihood estimation (MLE).
The performance of the cooperative key generation scheme is extensively evaluated theoretically.
We successfully establish both a theoretical upper bound on the maximum secret key rate from mutual information of correlated random sources and a more practical upper bound from Cramer-Rao bound (CRB) in estimation theory. Numerical examples and simulation studies are also presented to demonstrate the performance of the cooperative key generation system. The results show that the key rate can be improved by a couple
of orders of magnitude compared to the existing approaches.

 \end{abstract}

\begin{IEEEkeywords}
Key generation, cooperative networking, multipath channel, single-tone estimation, maximum likelihood estimation, wireless network.
\end{IEEEkeywords}

%
\IEEEpeerreviewmaketitle

\section{Introduction}

\IEEEPARstart{A}{} fundamental problem of all wireless communications is the secure distribution of secret keys, which must be generated and shared between authorized  parties prior to the start of communication. In the field of cryptography, the Diffie-Hellman key exchange protocol is one of the most basic and widely used cryptographic protocols for secure key establishment. The essential idea behind the Diffie-Hellman key exchange is that: two parties that have no prior knowledge of each other to jointly establish a shared secret key over an insecure communication channel. However, the protocol assumes the adversary has bounded computation power  and relies upon computational hardness of
certain mathematical problems to achieve secure key generation. This body of cryptographic protocols achieve \textit{computational security}.


Recently, the notion of  physical layer (PHY) based key generation
has been proposed and the resulting approaches serve as alternative solutions to the key
establishment problem in wireless networks. Based on the theory of reciprocity of antennas and electromagnetic propagation, the channel responses between two transceivers can be used as a source of common randomness that is not available to adversaries in other locations.
Such source of secrecy, which is provided by the fading process of wireless channels, can help to achieve \textit{information-theoretical security}.
This body of work can be traced back to the original information-theoretical formulation of secure communication due to \cite{Sh49}.
Building on information theory and following~\cite{Sh49}, information theorists characterized the fundamental
bounds and showed the feasibility of generating secrets using auxiliary random sources~\cite{Ma97,MaWo03,AhCs93}. However, they are almost all based on theoretical results and do not present explicit constructions. To the best of our knowledge,
Hershey \textit{et al.} proposed the first key generation scheme
based on differential phase detection in \cite{HaSt96}.
Using multipath channels as the source of common randomness, recent researches focus on measuring a popular statistic of wireless channel,
\textit{i.e.}, received signal strength (RSS), for extracting shared secret bits between node pairs
\cite{SaKi07,MaTr08,JaPr09}. It has been demonstrated that these RSS based
methods are feasible on customized 802.11 platforms.
The state of the art, however, still suffers from major limitations. First,
the key bit generation rate supported by these approaches is very
low. This is due to the fact that the PHY based key generation relies on channel variations or node
mobility to extract high entropy bits. In the time intervals where channel changes slowly, only a limited number of key bits can be extracted.
The resulting low key rate significantly limits their practical application given
the intermittent connectivity in mobile environments.
To increase the key rate, Zeng \textit{et al.} proposed a key generation protocol by exploiting multi-antenna diversity~\cite{ZeWu10}. But it also leads to an increase in the complexity of the transceivers.
Second, the generated raw key bit stream has low randomness. This is because the distribution of the RSS measurements or estimates is not uniform, which results in unequally likely bits after quantization. As cryptographic keys need to be as random as possible so that it is infeasible to reproduce them or predict them, it is important to ensure high entropy of the generated keys.
However, the problem of how to safely and efficiently generate random key bits using channel randomness is still open.




To overcome the above limitations, in this paper, we investigate the problem of cooperative key generation between two nodes with the aid of third parties, \textit{i.e.}, relay nodes.
The introduction of the relay nodes is motivated by the \textit{diversity gain} provided by the relay nodes, which can potentially help to increase the key rate by furnishing the two nodes additional correlated randomness. To enhance the level of entropy of bit sequences, we propose to exploit the uniformly distributed channel phase for key generation.
Specifically, we develop a novel time-slotted cooperative key generation scheme by exploiting channel phase randomness under narrowband fading channels.
For the first time,  we explicitly consider the effect of estimation methods on the extraction of secret key bits from the underlying fading channels and focus on a popular statistical method--maximum likelihood estimation (MLE). The main features of the proposed scheme are: i) The  key bit generation rate is improved by a couple
of orders of magnitude compared to RSS based approaches. This is due to the high-accuracy MLE and the fact that the random channels between the relay and the keying nodes can be effectively utilized during a single \textit{coherence time}. That also implies the proposed scheme can even work in a static environment where channels change very slowly; ii) The generated bit stream is very close to a truly
random sequence due to the use of uniformly distributed channel phase for bit generation;
iii) It is robust to relay node compromise attacks since each relay node only contributes a small portion of key bits and a small number of them can
never obtain the complete global key bit information even collectively.
The performance of the cooperative key generation scheme is extensively evaluated theoretically.
We successfully establish both a theoretical upper bound on the maximum secret key rate from mutual information of correlated random sources and a more practical upper bound from Cramer-Rao bound (CRB) in estimation theory.
We also show that the \textit{cooperative gain} in the key generation is similar to the beamforming gain in cooperative networking, \textit{i.e.}, the resulting gain is linear to the number of relay nodes.
Numerical examples and simulation studies are also presented to demonstrate the performance of the cooperative key generation system. The results show that the key rate can be improved by a couple
of orders of magnitude compared to the existing approaches.

The rest of the paper is organized as follows: Section II gives problem formulation and introduces wireless fading channel model considered in this paper. Section III
discusses related work.
Section IV provides the detailed description of our proposed cooperative key generation schemes.
Section V and VI present the theoretical performance analysis and simulation studies, respectively.
Section VII provides a security discussion of the proposed scheme from both practical and theoretical aspects.
Finally, Section VIII concludes the paper.

\section{Problem Formulation and Preliminaries}
In this section, we first define the PHY based key generation problem in wireless networks and introduce the general assumptions made in the existing work~\cite{SaKi07,MaTr08,JaPr09, YeMa10}.
 This will explain why wireless channel between a transmitter-receiver pair can be used as a source of common randomness for secret generation.
Then we discuss two most common channel models and focus on the narrowband fading channel, which is closely related to the key generation schemes developed in this paper.

\subsection{Problem Formulation}

In a multipath fading wireless environment, the physical signals transmitted between a transmitter-receiver pair rapidly decorrelate in space, time and frequency. That implies that it is very hard for a third party to predict the channel state between the legitimate parties, \textit{i.e.}, an eavesdropper at a third location (\textit{e.g.}, one half of wavelength away) cannot observe the same channel response information. This channel \textit{uniqueness} property of the transmitter-receiver pair offers potential  security guarantees. Further, the channel \textit{reciprocity} indicates the availability of using common randomness between the transmitter-receiver pair:
the electromagnetic
waves traveling in both directions will undergo the same
physical perturbations. That implies that in a time-division duplex (TDD) system, if the transmitter-receiver pair
operates on the same
frequency in both directions, the channel states/channel impulse responses
observed at two ends will theoretically be the same. Based on these two observations, we can see that there exists a natural random source in wireless communications for secrecy extraction.

Consider two parties $A$ and $B$ (we term them as \textit{keying nodes} in the following discussion) that want to establish a symmetrical key between them in the presence of an eavesdropper $E$. The keying nodes are assumed to be half-duplex in the sense that they cannot transmit and receive signals at the same frequency simultaneously.
In the first timeslot, $A$ transmits a signal $x_A$ to $B$, and $E$ can also hear this signal over the wireless channel. The signals received by $B$ and $E$ are:
\begin{eqnarray*}
  r_B &=& h_{AB}x_A + n_B \\
  r_E &=& h_{AE}x_A + n_E,
\end{eqnarray*}
where $h_{AB}$ and $h_{AE}$ are the channel gains from $A$ to $B$ and $A$ to $E$, respectively, and $n_B$ and $n_E$ are noises at $B$ and $E$, respectively.
In the second timeslot, $B$ transmits a signal $x_B$ to $A$, and $E$ can also hear this signal over the wireless channel. The signals received by $A$ and $E$ are:
\begin{eqnarray*}
  r_A &=& h_{BA}x_B + n_A \\
  r_E &=& h_{BE}x_B + n_E,
\end{eqnarray*}
where $h_{BA}$ and $h_{BE}$ are the channel gains from $B$ to $A$ and $B$ to $E$, respectively, and $n_A$ and $n_E$ are noises at $A$ and $E$, respectively.
The channel from node $i$ to node $j$ is modeled
as a  multipath fading model with channel impulse $h_{i,j}(t)$.
 We assume channel reciprocity in the forward and reverse directions during the \emph{coherence time} such that  $h_{i,j}(t)=h_{j,i}(t)$
and the underlying noise in each channel is additive white
Gaussian noise (AWGN). In wireless communications, \emph{coherence time} is a statistical measure of the time duration over which the channel impulse response is essentially invariant, and quantifies the similarity of the channel response at different times.

The keying nodes $A$ and $B$ compute the sufficient statistic $\hat{r}_B$ and $\hat{r}_A$, respectively, and generate the secret key based on these observations. In our system, we assume there exist $N$ relay nodes, which are honest and will help and cooperate with the keying nodes $A$ and $B$ to generate secret keys.
On the other side, the eavesdropper $E$ knows the  whole key generation protocol  and can eavesdrop all the communications between legitimate nodes (\textit{i.e.}, $A$, $B$ and relay nodes).
Based on communication theory \cite{Go05}, the signals transmitted between $A$ and $B$ and the signals transmitted between $A$ ($B$) and $E$, which is at least $\lambda/2$ away from the network nodes, experience independent fading.
 As an example, consider a wireless system with
900MHz carrier frequency.
If an eavesdropper $E$ is more than 16cm away from the communicating nodes, it experiences independent channel variations such that no useful information is revealed to it.
Following the same assumptions in most key generation schemes~\cite{SaKi07,JaPr09,WiTs07,YeMa10}, we assume that the adversary $E$ aims to derive the secret key generated between legitimate nodes and further steal the transmitted private information in the future.
 Those active attacks where the attacker tampers the transmissions are orthogonal to our research and thus not considered in this paper.


The above problem can be considered as a key generation problem in cooperative wireless networks in the presence of an eavesdropper. In this paper, we propose to develop an efficient and secure cooperative key generation protocol and provide an information-theoretic study on maximum key rate using techniques from both information theory and estimation theory.
The proposed design should satisfy the following requirements: i) High key rate. Given the intermittent connectivity in mobile environments, the key generation scheme should have a high key rate; ii) Sound key randomness. As cryptographic keys need to be as random as possible so that it is infeasible to reproduce them or predict them, the resulting key bits should have a high level of entropy. Note that the existing schemes usually rely on channel variations or node mobility to extract high entropy bits. We propose to remove this constraint and establish random keys even in static environments.

%
%
%

\subsection{Narrowband and Wideband Fading Channels}\label{Narrowband and Wideband Channels}
An important characteristic of a multipath channel is the \textit{delay spread} $\nu$ it causes to the signal~\cite{Go05}.
If $\nu$ is large, the multipath components are typically resolvable, leading to the wideband fading channel, where the resulting probability distributions for the gains of multipath channel paths are often modeled as log-normal or Nakagami~\cite{WiTs07}.
If $\nu$ is small, the multipath components are typically nonresolvable, leading to the narrowband fading channel, where the amplitude gain is Rayleigh distributed.

In this paper, we will focus on a narrowband fading system for secret key generation. Our approach can also apply to wideband fading channels. But as will be shown, it suits best for narrowband fading channel model. Let the transmitted signal be
\begin{eqnarray*}
  x(t) &=& \mathfrak{R}\{\tilde{u}(t)e^{j2\pi f_c t}\},
\end{eqnarray*}
where $\tilde{u}(t)$ is
the complex envelope of $x(t)$ with bandwidth $B$ and $f_c$ is its carrier frequency.
Assume the equivalent lowpass time-varying channel impulse response is $h(\tau,t)=\sum_{n=0}^{N(t)}\alpha_n(t) e^{-j\phi_n (t)}\delta(\tau-\tau_n(t))$,
the received signal can be written as
%

\begin{eqnarray}
\label{receive signal}
  r(t) &=&  x(t)\ast h(\tau,t)  \\
  &=& \mathfrak{R}\left \{ 	\left (\int\limits_{-\infty}^{\infty}h(\tau,t)\tilde{u}(t-\tau)\, dx  \right )e^{j2\pi f_c t}\right \}\notag \\
    &=& \mathfrak{R}\left \{\left (\sum_{n=0}^{N(t)} \alpha_n(t) e^{-j\phi_n (t)}u(t-\tau_n(t))\right )e^{j2\pi f_c t}  \right \}, \notag
\end{eqnarray}
where $\alpha_n(t)$ is a function of path loss and shadowing while $\phi_n(t)$ depends on delay, Doppler, and carrier
offset. Typically, it is assumed that these two random processes $\alpha_n(t)$ and $\phi_n(t)$ are independent.  Note $N(t)$ is the number of resolvable multipath components. For narrowband fading channels, each term in the sum of Eq.~(\ref{receive signal}) results from nonresolvable multipath components.

Under most delay spread characterizations, $\nu\ll 1/B$ implies that the delay associated with the $k$th multipath
component $\tau_k\leq \nu$ $\forall k$, so $u(t-\tau_k) \thickapprox u(t)$. If $x(t)$ is assumed to be an unmodulated carrier (single-tone signal) $x(t)=\mathfrak{R}\{e^{j2\pi f_c t}\}=\cos2\pi f_c t$, it is narrowband for
\textit{any} $\nu$. With these assumptions, the received signal  becomes
%
%

\begin{eqnarray}
   \label{IQ expression}
 r(t) &=& \mathfrak{R}\left \{ \left (\sum_{n=0}^{N(t)} \alpha_n(t) e^{-j\phi_n (t)}\right) e^{j2\pi f_c t} \right \} \\
   &=&r_I(t)\cos2\pi f_ct-r_Q(t)\sin2\pi f_ct, \notag
\end{eqnarray}
where the in-phase and quadrature components are given by
$ r_I(t)= \sum_{n=1}^{N(t)} \alpha_n(t) \cos \phi_n(t)$ and $r_Q(t)= \sum_{n=1}^{N(t)} \alpha_n(t)$
$\sin \phi_n(t)$, respectively. The in-phase and quadrature components of Rayleigh fading process are jointly Gaussian random process.
The complex ``lowpass'' equivalent signal for $r(t)$ is given by $r_I(t)+jr_Q(t)$ which has phase $\theta=\arctan(r_Q(t)/r_I(t))$, where $\theta$ is uniformly distributed, \textit{i.e.}, $\theta \in \mathcal{U}[0,2\pi]$. So $r_I(t)+jr_Q(t)$ can be written as $r_I(t)+jr_Q(t)=|h|e^{j\theta}=|h|\cos \theta+j|h|\sin \theta$, where $|h|=\sqrt{r_I(t)^2+r_Q(t)^2}$.  Consider the additive white Gaussian noise (AWGN) in the channel, Eq.~(\ref{IQ expression}) can be written as
\begin{eqnarray}
  r(t) &=& |h|\cos \theta \cos 2\pi f_ct - |h|\sin \theta \sin 2\pi f_ct+n(t) \\
       &=& |h|\cos(2\pi f_ct+\theta) + n(t), \notag
       \label{single tone}
\end{eqnarray}
where $n(t)$ is a Gaussian noise process with power spectral density $\frac{N_0}{2}$.
We will estimate parameters in $r(t)$ and use the uniformly distributed phase of multipath channel for key generation. A list of important notation is shown in Table.~\ref{notation}.

\begin{table}[t!]
\caption{A summary of important notation.} 
\centering 
\begin{tabular}{c|l} 
\hline\hline 
Symbol&\multicolumn{1}{c}{Definition}~~~~~~~~~~~~~~ \\ [0.5ex]
\hline 
$p_e$ & the  bit error probability (BER)\\ 
$T_c$ & coherence time\\
$\nu$ & delay spread\\
$q$ & the number of quantization intervals\\
$T_o$ ($T_i$) & observation time or beacon duration time\\
$N_s$ & the number of samples in the observation time\\
$N_0$ & the one-sided power spectra density (PSD) \\
$h_{ji}^I, h_{ij}^I$ & channel gains\\
$f_s$ & sampling rate\\
$N$ & number of relay nodes\\
$R_{k}^{MI}$ & key rate from mutual information with no relay\\
$R_{k}^{CRB}$ & key rate from CRB with no relay\\
$R_{co}^{MI}$ & cooperative key rate from mutual information\\
$R_{co}^{CRB}$ & cooperative key rate from CRB\\[1ex]
\hline 
\end{tabular}
\label{notation}
\end{table}

\section{Related Work}

The PHY based key generation can be traced back to the original information-theoretic formulation of secure communication due to \cite{Sh49}.
Building on information theory, \cite{Ma97,MaWo03,AhCs93} characterized the fundamental
bounds and showed the feasibility of generating keys using external random source-channel impulse response.
To the best of our knowledge, the first key generation scheme suitable for wireless network was proposed in \cite{HaSt96}. In \cite{HaSt96}, the
differential phase between two frequency tones is encoded for key generation. Error control coding techniques are used for enhancing the reliability of key generation.
Similar to \cite{HaSt96}, a technique of using random phase for extracting secret keys in an OFDM system through channel estimation and quantization was recently proposed in \cite{SaPe08}.
This paper characterized the probability of generating the same bit vector between two nodes as a function of signal-to-interference-and-noise (SINR) and quantization levels.

A  key generation scheme based on  extracting secret bits from  correlated deep fades was proposed in \cite{SaKi07} and distinguished from
the aforementioned work by using received signal strength (RSS) as the random source via a TDD link for the protocol design. Two cryptographic
tools-- information reconciliation and privacy amplification are used to eliminate bit vector discrepancies due to RSS measurement asymmetry. The final key agreement is achieved by leaking out minimal information for error correcting and sacrificing a certain amount of entropy for generating
nearly perfect random secret bits.
In \cite{MaTr08}, the authors proposed two key generation schemes based on channel impulse response (CIR) estimation and RSS measurements.
 Different from \cite{SaKi07}, the two transceivers alternately send known probe signals to each another and estimate the magnitude of channel response at successive time instants.
 The excursions in the fading channels are used for generating bits and the timing of excursions  are used for key reconciliation. The resulting sequence are further filtered and quantized using a 1-bit quantizer, which results in low key bit rate. Motivated by observations from quantizing jointly Gaussian process,  a more general key generation scheme was proposed by exploiting empirical measurements to set quantization boundaries in \cite{YeMa10}.
 Working on the same RSS based approach, \cite{JaPr09} evaluated the effectiveness of RSS based key extraction in real
environments. It has been shown that due to lack of channel variations static environments are not suitable for establishing secure keys, and node mobility helps to generate key bits with high entropy. The most recent work \cite{WaSu11} proposed an efficient and scalable key generation scheme that supports both pairwise and group key establishments.

Due to noise, interference and other factors in the key generation process, discrepancies may exist between the generated bit streams.
Variants of this problem have been extensively explored under the names information reconciliation, privacy amplification and fuzzy extractors.
\cite{ReWo04} proposed the first protocol to solve the information-theoretic key agreement problem between two parties that initially posses only correlated weak secrets.
The key agreement was shown to be theoretically feasible when the information that the two bit strings contain about each other is more than the information that the eavesdropper has about them.
\cite{DoKa06} used error-correcting techniques to design a protocol that is computationally efficient for different distance metrics. Based on the previous results, \cite{KaRe09} proposed a protocol that is efficient for both parties and has both lower round complexity and lower entropy loss. Recently, \cite{DoWi09} proposed a two round key agreement protocol for the same settings as \cite{KaRe09}.

\section{The Proposed Solutions}
In this section, we present our cooperative key generation algorithms  for extracting secret bits from  wireless channels. The proposed algorithms employ the technique of \textit{single-tone parameter estimation} to estimate the uniformly distributed channel phase. When keying nodes $A$ and $B$ alternately transmit known single-tone signals to each other, each relay node also observes the fading signals transmitted through the pairwise links between him and the keying nodes. Therefore, with the aid of relay nodes, the keying nodes $A$ and $B$ can potentially increase the key rate using additional randomness in the same \textit{coherence time} interval.

\begin{figure*}[!t]
\begin{center}
\includegraphics [width=15.0cm]{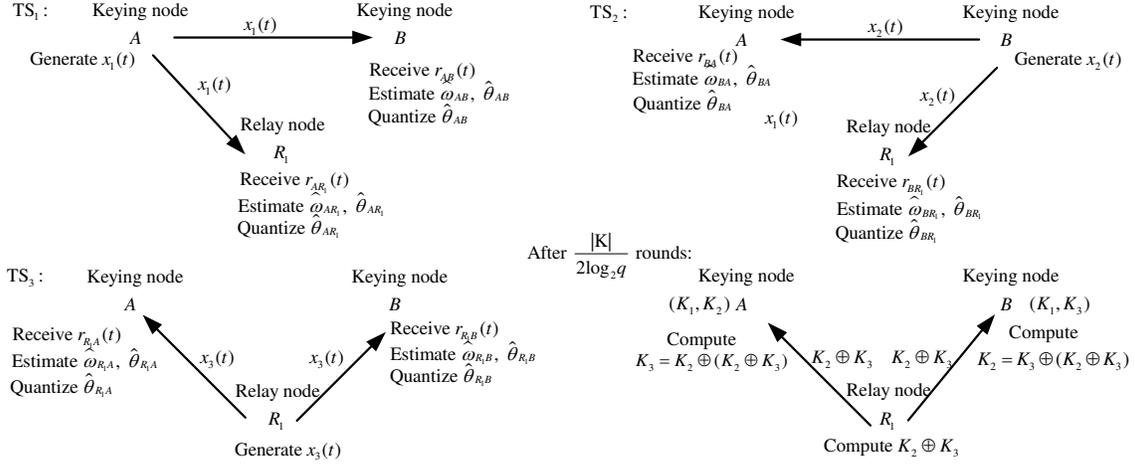}
\end{center}
\caption{Protocol for cooperative key generation with one relay.}
\label{one relay}
\end{figure*}

\subsection{Utilizing a Single Relay}
We fist consider the single relay case where one relay node acts as a helper to facilitate the key generation between the keying nodes $A$ and $B$. The basic idea is that an unmodulated carrier (\textit{i.e.}, single-tone signal) is transmitted through the fading channels back and forth between the keying nodes, and the keying nodes perform maximum Likelihood Estimation (MLE) based on their observation.  Since each bidirectional channel between a pair of nodes is a time-division-duplex (TDD) channel, which is reciprocal in both directions, it will incur the same total phase shift caused by multipath due to the channel \textit{reciprocity principle}.
Generally, the protocol consists of two main phases: i) Single-tone phase estimation and quantization; ii)
Key reconciliation and privacy amplification.

Before we introduce the cooperative key generation protocol, we first introduce the fundamental building block-- MLE used in single-tone signal parameter estimations.
During the protocol execution, the keying nodes $A$, $B$ and relay nodes use MLE to
estimate the parameters of a single-tone signal with a known signal model.
 Given certain observation set $\overline{Z}$ and parameter set $\overline{\alpha}$, the objective of MLE is to estimate the parameter set that maximizes the pdf of $\overline{Z}$. In our application, the received signal model can be written as
 \begin{eqnarray}
   r(t) &=& b_0 \cos (\omega_0t+\theta_0)+n(t),
 \end{eqnarray}
 where $\overline{\alpha}=\{ b_0,\omega_0, \theta_0 \}$ are the unknown parameters (amplitude, frequency and phase, respectively) to be estimated. The received signal is sampled at a constant sampling frequency rate  $f_s=1/T_s$ to produce the discrete-time observation
  \begin{eqnarray}
    r[m]&=&b_0 \cos (w_0(t_0+mT_s)+\theta_0)+n[m]
    \label{discrete signal}
  \end{eqnarray}
for $m=0, 1, \ldots,N_s-1$. Here, $t_0$ denotes the time of the first sample and $n[m]$s are Gaussian random samples with zero mean and variance $\sigma^2$. Let $\overline{Z} =(r[0], r[1], \ldots, r[N_s-1])$,
the pdf of $\overline{Z}$ is~\cite{Rife73}
 \begin{eqnarray*}
f(\overline{Z};\overline{\alpha})&=& \left (\frac{1}{\sigma\sqrt{2\pi}} \right )^{N_s}\exp \left \{-\frac{1}{2\sigma^2}\sum_{m=0}^{N_s-1}(r[m]-\mu[m])^2 \right\},
  \end{eqnarray*}
where $\mu[m]=b_0\cos(w_0(t_0+mT_s)+\theta_0)$.
In the following discussion, we ignore discussion on the estimation of signal amplitude $b_0$ since its estimation is independent from the estimation of frequency and phase.

The $N_s$ samples in Eq. (\ref{discrete signal}) is provided as an input of the MLE estimator.
According to the results in \cite{Rife73}, the maximum of function $f(\overline{Z},\overline{\alpha})$ is achieved when
\begin{equation}\label{phase_est}
    \theta_0=-\tan^{-1}\frac{\sum_{m=0}^{N_s-1} r[m]\sin(\omega m)}{\sum_{m=0}^{N_s-1} r[m] \cos (\omega m)}.
\end{equation}
Thus, we can first estimate the frequency of the signal, and then calculate the ML estimate of the phase using Eq. (\ref{phase_est}). Specifically, the MLE is implemented in three steps:
\begin{enumerate}
  \item \textit{Rough frequency search.} We calculate the Discrete-time Fourier Transformation (DFT) of $\overline{Z}$ and find the $\hat{k}=\arg \max_k |R[\omega_k]|$, where $\omega_k=\frac{2k\pi}{NT_s}$ and $N$ is the length of the DFT. Here, $N$ is chosen to be a power of 2 and greater than $N_s$. Then we can calculate the roughly estimated frequency as $\omega_l=\frac{2\hat{k}\pi}{NT}$. Such frequency estimate has large estimation error due to the limited resolution of the DFT. Thus, a more accurate estimation is desired;
  \item \textit{Fine frequency search.} Based on the rough estimation in the last step, we can calculate the $\hat{\omega}$ by maximizing function $|R(\omega)|$, where $R(\omega)$ is the continuous DFT of the sample sequence $r[m]$ in the interval $[\frac{2(\hat{k}-1)\pi}{N_sT}, \frac{2(\hat{k}+1)\pi}{NT_s}]$. The fine search algorithm locates the value of $\omega$ closest to $\omega_l$ that maximizes $|R(\omega)|$. The \textit{secant method} is used to compute successive approximations
to the frequency estimate $\hat{\omega}=\arg \max_\omega |R(\omega)|$.
  \item \textit{Phase estimation.} The phase estimate can be calculated by substituting $\hat{\omega}$ to Eq. ($\ref{phase_est}$).
\end{enumerate}
The performance of MLE is measured by the variance of the estimation errors. This variance can be lower-bounded by the Cramer-Rao bound (CRB) \cite{RiBo74}. The performance of the ML estimator, which is closely related to the performance of the proposed cooperative key generation scheme, will be discussed and analyzed later.
We present the cooperative key generation protocol as follows (See Fig.~\ref{one relay}):

\noindent \textbf{Phase One: Single-tone phase estimation and quantization}.

\noindent $\textbf{\textup{TS}}_1$:
The protocol begins in timeslot 1  with transmission of a sinusoidal primary beacon of duration $T_1$ from node $A$:
\begin{eqnarray*}
~~~~~~  x_1(t)= a_1\cos(w_c(t-t_1)),
\end{eqnarray*}
where $t\in [t_1,t_1+T_1)$.
To simplify the exposition, we assume $t_1=0$ in the following discussion, \textit{i.e.}, the protocol starts at time zero point.

Node $B$ ($R_1$) observes the initial transient response of the multipath channel  $h_{A,B}(t)$ ($h_{A,R_1}(t)$) to the beacon $x_1(t)$ over the interval
$t\in [\tau_{AB}, \tau_{AB}+\nu_{AB})$ ($t\in [\tau_{AR_1}, \tau_{AR_1}+\nu_{AR_1})$), where $\tau_{AB}$ ($\tau_{AR_1}$)
denotes the delay of the shortest path and $\nu_{AB}$ ($\nu_{AR_1}$) denotes the finite \textit{delay spread} of the channel $h_{A,B}(t)$ ($h_{A,R_1}(t)$).
In order to achieve a  steady-state response  at both $B$ and $R_1$, it
is required that $T_1>\max\{\nu_{AB},\nu_{AR_1}\}$. The ``steady-state'' portion of the beacons
received at $B$ and $R_1$ can be written as
\begin{eqnarray*}
\mathrm{At}~B:~~  r_{AB}(t) &=&  a_1\alpha_{AB}\cos(w_ct+\theta_{AB})+ n_{AB}(t),\\
\mathrm{At}~R_1:~~  r_{AR_1}(t) &=&  a_1\alpha_{AR_1}\cos(w_ct+\theta_{AR_1})+ n_{AR_1}(t),
\end{eqnarray*}
where $~t\in [\tau_{AB}+\nu_{AB}, \tau_{AB}+T_1)$ ($~t\in [\tau_{AR_1}+\nu_{AR_1}, \tau_{AR_1}+T_1)$) for $B$ ($R_1$), and $n_{AB}(t)$ ($n_{AR_1}(t)$) denotes the additive white Gaussian noise (AWGN) in the $A\rightarrow B$ ($A\rightarrow R_1$) channel. $\alpha_{AB}$ ($\alpha_{AR_1}$) and $\theta_{AB}$ ($\theta_{AR_1}$) are the steady-state gain and the phase response of channel $h_{A,B}(t)$ ($h_{A,R_1}(t)$), respectively.
At the end of primary beacon, a  final transient response of the multipath channel is also received by $B$ ($R_1$)
over the interval $t \in [\tau_{AB}+T_1, \tau_{AB}+\nu_{AB}+T_1)$ ($t \in [\tau_{AR_1}+T_1, \tau_{AR_1}+\nu_{AR_1}+T_1)$). $B$ ($R_1$) uses only the steady-state portion
of the noisy observation to compute ML estimates of the received frequency and phase, which are denoted by $\hat{w}_{AB}$ ($\hat{w}_{AR_1}$) and
$\hat{\theta}_{AB}$ ($\hat{\theta}_{AR_1}$), respectively.


\vspace{2mm}
\noindent $\textbf{\textup{TS}}_2$: Upon the conclusion of the primary beacon $r_{AB}(t)$, in timeslot 2,  $B$ begins the transmission of a sinusoidal secondary beacon at
$t_2=\max\{\tau_{AB}+\nu_{AB}+T_1,\tau_{AR_1}+\nu_{AR_1}+T_1\}$. The secondary beacon transmitted by $B$ at $t_2$ can be written as
\begin{eqnarray*}
~~~~~~  x_2(t)= a_2\cos (w_{c}(t-t_2)),
\end{eqnarray*}
where $t\in [t_2,t_2+T_2)$.
$A$ ($R_1$) observes the initial transient response of the multipath channel  $h_{B,A}(t)$ ($h_{B,R_1}(t)$) to beacon $x_2(t)$ over the interval
$t\in [t_2+\tau_{BA}, t_2+\tau_{BA}+\nu_{BA})$ ($t\in [t_2+\tau_{BR_1}, t_2+\tau_{BR_1}+\nu_{BR_1})$), where $\nu_{BA}=\nu_{AB}$ ($\nu_{BR_1}=\nu_{R_1B}$) due to channel reciprocity. In order to achieve a steady-state response at both $A$ and $R_1$,
 $T_2>\max\{\nu_{BA},\nu_{BR_1}\}$ is required. The steady-state portion of the beacons
received at $B$ and $R_1$ can be written as
\begin{eqnarray*}
\mathrm{At}~A:~~  r_{BA}(t) &=&  a_2\alpha_{BA}\cos(w_{c}t+\theta_{BA})+ n_{BA}(t),\\
\mathrm{At}~R_1:~~  r_{BR_1}(t) &=&  a_2\alpha_{BR_1}\cos(w_{c}t+\theta_{BR_1})+ n_{BR_1}(t),
\end{eqnarray*}
 where $t\in [t_2+ \tau_{BA}  +\nu_{BA}, t_2+\tau_{BA}+T_2)$ ($t\in [t_2+ \tau_{BR_1}  +\nu_{BR_1}, t_2+\tau_{BR_1}+T_2)$) for $A$ ($R_1$), and $n_{BA}(t)$ ($n_{BR_1}(t)$) denotes the additive white Gaussian noise (AWGN) in the $B\rightarrow A$ ($B\rightarrow R_1$) channel.
$\alpha_{BA}$ ($\alpha_{BR_1}$) and $\theta_{BA}$ ($\theta_{BR_1}$) are the steady-state gain and the phase response of channel $h_{B,A}(t)$ ($h_{B,R_1}(t)$), respectively.
At the end of this beacon, a final transient response of the multipath channel is received by $A$ ($R_1$)
over the interval $t \in [t_2+\tau_{BA}+T_2, t_2+\tau_{BA}+T_2+\nu_{BA})$ ($t \in [t_2+\tau_{BR_1}+T_2, t_2+\tau_{BR_1}+T_2+\nu_{BR_1})$). Similar to  $\textup{TS}_{1}$, $A$ ($R_1$) uses only the steady-state portion of the noisy observation to compute ML estimates of the received frequency and phase, which are denoted by $\hat{w}_{BA}$ ($\hat{w}_{BR_1}$) and
$\hat{\theta}_{BA}$ ($\hat{\theta}_{BR_1}$), respectively.

\vspace{2mm}
\noindent $\textbf{\textup{TS}}_3$: Upon the conclusion of the primary beacon $r_{BR_1}(t)$, in timeslot 3 $R_1$ begins the transmission of a sinusoidal secondary beacon at
$t_3=\max\{t_2+\tau_{BA}+\nu_{BA}+T_2,t_2+\tau_{BR_1}+\nu_{BR_1}+T_2\}$. The third beacon transmitted by $R_1$ at $t_3$ can be written as
\begin{eqnarray*}
~~~~~~  x_3(t)= a_3\cos (w_{c}(t-t_3)),
\end{eqnarray*}
where $t\in [t_3,t_3+ T_3)$.
$A$ ($B$) observes the initial transient response of the multipath channel  $h_{R_1,A}(t)$ ($h_{R_1,B}(t)$) to beacon $x_3(t)$ over the interval
$t\in [t_3+\tau_{R_1A}, t_3+\tau_{R_1A}+\nu_{R_1A})$ ($t\in [t_3+\tau_{R_1B}, t_2+\tau_{R_1B}+\nu_{R_1B})$), where $\nu_{R_1A}=\nu_{AR_1}$ ($\nu_{R_1B}=\nu_{BR_1}$) due to channel reciprocity. In order to achieve a steady-state response at both $A$ and $B$,
 $T_3>\max\{\nu_{R_1A},\nu_{R_1B}\}$ is required. The steady-state portion of the beacons
received at $A$ and $B$ can be written as
\begin{eqnarray*}
\mathrm{At}~A:~~  r_{R_1A}(t) &=&  a_3\alpha_{R_1A}\cos(w_{c}t+\theta_{R_1A})+ n_{R_1A}(t),\\
\mathrm{At}~B:~~  r_{R_1B}(t) &=&  a_3\alpha_{R_1B}\cos(w_{c}t+\theta_{R_1B})+ n_{R_1B}(t),
\end{eqnarray*}
 where $t\in [t_3+ \tau_{R_1A}  +\nu_{R_1A}, t_3+\tau_{R_1A}+T_3)$ ($t\in [t_3+ \tau_{R_1B}  +\nu_{R_1B}, t_3+\tau_{R_1B}+T_3)$) for $A$ ($B$), and $n_{R_1A}(t)$ ($n_{R_1B}(t)$) denotes the additive white Gaussian noise (AWGN) in the $R_1\rightarrow A$ ($R_1\rightarrow B$) channel.
$\alpha_{R_1A}$ ($\alpha_{R_1B}$) and $\theta_{R_1A}$ ($\theta_{R_1B}$) are the steady-state gain and the phase response of channel $h_{R_1,A}(t)$ ($h_{R_1,B}(t)$), respectively.
At the end of this beacon, a final transient response of the multipath channel is received by $A$ ($B$)
over the interval $t \in [t_3+\tau_{R_1A}+T_3, t_3+\tau_{R_1A}+T_3+\nu_{R_1A})$ ($t \in [t_3+\tau_{R_1B}+T_3, t_3+\tau_{R_1B}+T_3+\nu_{R_1B})$). Similar to  $\textup{TS}_{2}$, $A$ ($B$) uses only the steady-state portion of the noisy observation to compute ML estimates of the received frequency and phase, which are denoted by $\hat{w}_{R_1A}$ ($\hat{w}_{R_1B}$) and
$\hat{\theta}_{R_1A}$ ($\hat{\theta}_{R_1B}$), respectively.

\textit{Quantization.} To generate high-entropy bits, we assume $A$, $B$ and $R_1$ run the above steps once during each \textit{coherence time} interval. For ease of exposition, we term the above steps as round 1.
After round 1, each of the three nodes has two phase estimates for quantization
\begin{eqnarray*}
A:    \hat{\theta}_{BA}~\textbf{mod}~ 2\pi~,~ \hat{\theta}_{R_1A} ~\textbf{mod}~ 2\pi\\
B:     \hat{\theta}_{AB}~\textbf{mod}~ 2\pi~,~ \hat{\theta}_{R_1B} ~\textbf{mod}~ 2\pi\\
R_1:     \hat{\theta}_{AR_1}~\textbf{mod}~ 2\pi~,~ \hat{\theta}_{BR_1} ~\textbf{mod}~ 2\pi
\end{eqnarray*}
Each node uniformly maps their phase estimates into the quantization interval/index using the following formula:
\begin{eqnarray*}
  Q(x) &=& k ~~~\mathrm{if}~x \in [\frac{2\pi(k-1)}{q},\frac{2\pi k}{q})
\end{eqnarray*}
for $k=1,2,\ldots,q$.
Therefore, in the first round, the quantization of
each phase value generates $\log_2(q)$ secret bits. Due to channel reciprocity principle,  $A$ and $B$ share $\log_2(q)$ bits generated from $\hat{\theta}_{BA}$ ($\hat{\theta}_{AB}$); $A$ and $R_1$ share $\log_2(q)$ bits generated from $\hat{\theta}_{R_1A}$ ($\hat{\theta}_{AR_1}$); $B$ and $R_1$ share $\log_2(q)$ bits generated from $\hat{\theta}_{R_1B}$ ($\hat{\theta}_{BR_1}$). Note the quantization index $k$ is encoded into bit vectors. In our implementation, we use \textit{gray codes} to reduce the bit error probability (BER).

Assume the desired key size is $|K|$. For round $k=2, 3, \ldots, \frac{|K|}{2\log_2(q)}$, $A$, $B$ and $R_1$ repeat the operations as in $\textup{TS}_1$, $\textup{TS}_2$ and $\textup{TS}_3$ to generate phase estimates and convert them into bit vectors through $q$-level quantization.

After $\frac{|K|}{2\log_2(q)}$ rounds, a key of size $\frac{|K|}{2}$ is shared between $A$ and $B$, which is denoted as $K_1$. Similarly, a key of size $\frac{|K|}{2}$ is shared between $A$ and $R_1$, which is denoted as $K_2$; a key of size $\frac{|K|}{2}$ is shared between $B$ and $R_1$, which is denoted as $K_3$. Then $R_1$ computes $K_2\oplus K_3$ and transmits it over the public channel. $A$ receives the XOR information and computes $K_2\oplus(K_2\oplus K_3)=K_3$. Similarly, $B$ obtains $K_2$ by $K_3\oplus(K_2\oplus K_3)=K_2$. Now both $A$ and $B$ have keys $K_1, K_2$ and $K_3$.

Finally, $A$ and $B$ set the final key as $K_1||K_2$ or $K_1||K_3$, and a secret key with size $|K|$ is established. Note that we use either $K_2$ or $K_3$ instead of both as the component of the final key. The reason is that with either one of $K_2$ and $K_3$ the eavesdropper can recover the other one by leveraging $K_2\oplus K_3$.

%
%
%




\vspace{2mm}
\noindent \textbf{Phase Two: Key reconciliation and privacy amplification}.
Due to reciprocity principle, the generated bit sequence at $A$ and $B$ should be identical. However, there may exist a small number of bit discrepancies due to estimation errors, hardware variations and half-duplex beacon transmission.
 These error bits can be corrected using key reconciliation techniques~\cite{KaRe09,DoOs08}. Assume $A$ and $B$ hold $K$ and $K'$, respectively. And the Hamming distance
 $\textsf{dis}(K,K')\leq t$.
Following Code-offset construction in \cite{DoOs08}, we use a $[n,k,2t+1]_{2}$ error-correcting code $C$
 to correct errors in $K'$ even though $K'$ may not be in $C$.
 When performing key reconciliation, node $A$ randomly selects a codeword $c$ from $C$ and
 computes \emph{secure sketch} $\textsf{SS}(K)=s=K\oplus c$. Then $s$ is sent to node $B$. Upon receiving $s$,  node $B$ subtracts the shift
 $s$ from $K'$ and gets $\textsf{Rec}(K',s)=c'=K'\oplus s$. Then node $B$ decodes $c'$ to get $c$, and computes $K$ by shifting
 back to get $K=c\oplus s$. Note that since the error-correcting information $s$ is public to both the communicating nodes and the adversary, it can be used by the adversary
to guess portions of the generated key~\cite{JaPr09}.
 To cope with this problem, $A$ and $B$ can further run privacy amplification protocols~\cite{KaRe09} to recover the entropy loss.

%
%

\subsection{Exploiting Multiple Relays}
In this subsection, we present the key generation protocol with multiple relay nodes. As discussed above, when there exists only one relay $R_1$, he can contribute $\log_2q$ bits in each \textit{coherence time} interval. Since the beacon duration (observation time) $T_i$ is relatively small compared to the \textit{coherence time}, a large portion of the \textit{coherence time} interval cannot be effectively utilized. This motivates us to incorporate more relays into the key generation process with potential two advantages: i) the key rate is further increased due to multiple relays' contribution during the same \textit{coherence time} interval. This also implies that even if the nodes or the environment remain static, a key with high entropy can be generated quickly since it employs the randomness of multiple different pairwise links; ii) the security strength is further enhanced as each relay only contributes a small portion of secret bits to the final key. That implies, even if a small number of relays are compromised, the adversary can
never obtain the complete global key bit information.

With the aid of $N$ relay nodes,
the protocol has a total of $N+2$ timeslots for each round (during one \textit{coherence time} interval $T_c$).
Assume the coherence time are roughly divided to $N+2$ portions, each with length $\frac{T_c}{N+2}$. The activities in each timeslot of round 1 are as follows (for ease of exposition, we ignore the explicit value of $t_i$ for $i=1,2,\ldots,N+2$):
\begin{enumerate}
  \item In $\textbf{\textup{TS}}_1$, node $A$ transmits a  sinusoidal primary beacon $x_1(t)$. Node $B$ ($R_j$, $j=\{1, 2, \ldots, N\}$) neglects the initial and final transient portions of the received signal and uses the steady portion to compute the channel phase estimates $\hat{\theta}_{AB}$ ($\hat{\theta}_{AR_k}$).
  \item In $\textbf{\textup{TS}}_2$, node $B$ transmits a sinusoidal secondary beacon $x_2(t)$. Node $A$ ($R_j$, $j=\{1, 2, \ldots, N\}$) neglects the initial and final transient portions of the received signal and uses the steady portion to compute the channel phase estimates $\hat{\theta}_{BA}$ ($\hat{\theta}_{BR_j}$).
  \item In $\textbf{\textup{TS}}_i$ ($i=\{3, 4, \ldots, N+2\}$), node $R_k$ ($j=\{1, 2, \ldots, N\}$) alternately transmits a  sinusoidal beacon $x_{i}(t)$. Nodes $A$ and $B$ neglect the initial and final transient portions of the received signal and use the steady portion to compute the channel phase estimates $\hat{\theta}_{R_jA}$ ($\hat{\theta}_{R_jB}$) for $j=\{1, 2, \ldots, N\}$.
\end{enumerate}

Assume the desired key size is $|K|$. For round $k=2, 3, \ldots, \frac{|K|}{(N+1)\log_2(q)}$, $A$, $B$ and $R_1$ repeat the operations as in $\textup{TS}_1, \textup{TS}_2, \ldots, \textup{TS}_{N+2}$ to generate phase estimates and convert them into bit vectors through $q$-level quantization.

After $\frac{|K|}{(N+1)\log_2(q)}$ rounds, a key of size $\frac{|K|}{N+1}$ is shared between $A$ and $B$, which is denoted as $K_1$. Similarly, a key of size $\frac{|K|}{N+1}$ is shared between $A$ and $R_j$ $(j=\{1, 2, \ldots, N\})$, which is denoted as $K_{j1}$; a key of size $\frac{|K|}{N+1}$ is shared between $B$ and $R_j$ $(j=\{1, 2, \ldots, N\})$, which is denoted as $K_{j2}$. Then $R_j$ computes $K_{j1}\oplus K_{j2}$ and transmits it over the public channel. $A$ receives the XOR information and computes $K_{j1}\oplus(K_{j1}\oplus K_{j2})=K_{j2}$. Similarly, $B$ obtains $K_{j1}$ by $K_{j2}\oplus(K_{j1}\oplus K_{j2})=K_{j1}$. Now both $A$ and $B$ have $2N+1$ keys $K_1, K_{j1}$ and $K_{j2}$ for $j=\{1, 2, \ldots, N\}$.

Finally, $A$ and $B$ set the final key as $K_1||(K_{11}~\mathrm{or}~K_{12} )||(K_{21}~\mathrm{or}~K_{22} )||\cdots||(K_{N1}~\mathrm{or}~K_{N2} )$. The \textit{key reconciliation and privacy amplification} phase is the same as the single relay case. Note that since a single \textit{coherence time} interval is evenly allocated to the keying nodes and relay nodes, the increase of $N$ results in the decrease of available observation time $T_o$ (beacon duration $T_i$). As will be shown later, this would lead to the increase of estimation errors in MLE. Therefore, there must exist an optimal maximum $N$ under which key rate is maximized.

%

\section{Theoretical Performance Analysis}
In this section, we analyze the performance of the cooperative key generation protocol in terms of the maximum key rate the system can achieve. In information theory, the mutual information of two random variables/sequences is a quantity that measures the mutual dependence of the two variables/sequences. Therefore, the secret key rate can be upper bounded by the mutual information between the observations of two transceivers. Motivated by this, we first provide an information-theoretic study into the upper bound on the key rate using mutual information. This bound denotes the maximum key rate that can be generated from the common randomness between the keying nodes. In estimation theory, Cramer-Rao bound provides a lower bound on the variance of biased and unbiased estimators of a deterministic parameter. Since we utilize maximum likelihood estimation (MLE) in our proposed key generation protocol, we also propose to derive a tighter bound on the key rate using the Cramer-Rao bound (CRB).

\subsection{Knowing the Limit: The Upper Bound on Key Rate from Mutual Information}
In this subsection, we analyze the mutual information between the observations of two nodes $i$ and $j$ at two ends of a multipath fading channel. We start the analysis from the no-relay case. As shown above, all the received signals can be expressed as Eq. (\ref{single tone}). These single-tone signals can be precisely reconstructed from samples taken at sampling rate greater or equal at Nyquist rate $f_s=\frac{1}{T_s}=2f_c$ (Note in the following analysis, we choose $f_s\gg 2f_c$). The discrete-time observation at nodes $i$ and $j$ are
\begin{eqnarray}
  r_{ij}[m] &=& a\alpha_{ij}\cos(w_c(t_{ij}+mT_s)+\theta_{ij})+n_{ij}[m] \\
  r_{ji}[m] &=& a\alpha_{ji}\cos(w_c(t_{ji}+mT_s)+\theta_{ji})+n_{ji}[m]
  \label{received_discrete_signal}
\end{eqnarray}
for $m=0, 1, \ldots, N_s-1$, where $t_{ij}$ ($t_{ji}$) denotes the time of the first sample.
Note that when there is no relay,
nodes $A$ and $B$ each can generate $N_s$ samples by fully exploiting the \textit{coherence time} interval. That is, if we neglect the transmission delay, delay spread and processing delay, the observation time (\textit{i.e.}, beacon duration) is $T_o\approx\frac{T_c}{2}$. Thus, $N_s=T_o f_s=\frac{T_cf_s}{2}$.

Let $\textbf{R}_{ij}=[r_{ij}[0],r_{ij}[1],\ldots,r_{ij}[N_s-1]]$ and $\textbf{R}_{ji}=[r_{ji}[0],r_{ji}[1],\ldots,r_{ji}[N_s-1]]$ denote the samples obtained at nodes $j$ and $i$, respectively. According to \cite{WiTs07}, $I(r_{ij}(t);r_{ji}(t))=I(\textbf{R}_{ij};\textbf{R}_{ji})$ as $r(t)$ is fully defined by $\textbf{R}$.

%
%
In practice, given a set $\mathbf{X}$ of independent identically distributed data conditioned on an unknown parameter $\theta$, a sufficient statistic is a function $T(\mathbf{X})$ whose value contains all the information needed to compute any estimate of the parameter (e.g. a maximum likelihood estimate (MLE)). For ease of exposition, we rewrite Eq. (\ref{single tone}) here
 \begin{eqnarray*}
  r(t) &=& |h|\cos \theta \cos 2\pi f_ct - |h|\sin \theta \sin 2\pi f_ct+n(t)\notag \\
       &=& |h|\cos(2\pi f_ct+\theta) + n(t).
 \end{eqnarray*}
In MLE estimation, $|h|\cos(2\pi f_ct+\theta) + n(t)$ is sampled to estimate $|h|$ and $\theta$, where the complex expression of  multipath channel is $h=|h|e^{j\theta}$. Once $|h|$ and $\theta$ are obtained, the terms $|h|\cos \theta \cos 2\pi f_ct$ and $|h|\sin \theta \sin 2\pi f_ct$ are both determined. So it  is equivalent to sample and estimate a signal like $|h|\cos \theta \cos 2\pi f_ct$ or $|h|\sin \theta \sin 2\pi f_ct$ to fully determine the fading channel information. The ``equivalent'' received signals at nodes $i$ and $j$ can be written as
\begin{eqnarray*}
  \overline{r}_{ij}[m] &=& a\alpha_{ij}\cos\theta \cos(w_c(t_{ij}+mT_s))+n_{ij}[m] \\
  \overline{r}_{ji}[m] &=& a\alpha_{ji}\cos\theta \cos(w_c(t_{ji}+mT_s))+n_{ji}[m]
\end{eqnarray*}
for $m=0, 1, \ldots, N_s-1$. Because  $r[m]$ is fully defined by $\overline{r}[m]$ and vice versa, the mutual information between  $r_{ij}[m]$ and $r_{ji}[m]$ is the same as that between $\overline{r}_{ij}[m]$ and $\overline{r}_{ji}[m]$, \textit{i.e.}, $I(\textbf{R}_{ij};\textbf{R}_{ji})=I(\overline{\textbf{R}}_{ij};\overline{\textbf{R}}_{ji})$, where $\overline{\textbf{R}}_{ij}$ and $\overline{\textbf{R}}_{ji}$ are the discrete-time sequences of $\overline{r}_{ij}[m]$ and $\overline{r}_{ji}[m]$, respectively.

Now the problem becomes a Gaussion random variable estimation problem, where the in-phase component $r_I(t)=|h|\cos \theta=\alpha\cos \theta$ is to be estimated (in the following, we abuse standard notation by letting $h^I$ denote the in-phase component).
Let $\textbf{S}_i=\textbf{S}_j=[a\cos(w_c(0T_s)), a\cos(w_c(1T_s)),\ldots,a\cos(w_c(mT_s))]$. Both nodes $i$ and $j$ can compute a sufficient statistic $\widehat{\textbf{R}}_{ji}$ and
$\widehat{\textbf{R}}_{ij}$ for $\overline{\textbf{R}}_{ji}$ and $\overline{\textbf{R}}_{ij}$ respectively~\cite{LaPo09}
\begin{eqnarray}
\widehat{\textbf{R}}_{ji} &=&\frac{\textbf{S}_j^T}{||\textbf{S}_j||^2}\overline{\textbf{R}}_{ji}=h_{ji}^I+\frac{\textbf{S}_j^T}{||\textbf{S}_j||^2}\textbf{N}_i\\
\widehat{\textbf{R}}_{ij} &=&\frac{\textbf{S}_i^T}{||\textbf{S}_i||^2}\overline{\textbf{R}}_{ij}=h_{ij}^I+\frac{\textbf{S}_i^T}{||\textbf{S}_i||^2}\textbf{N}_j
\end{eqnarray}
where$||\textbf{S}_j||^2 =\textbf{S}_j^T\cdot \textbf{S}_j$ and $||\textbf{S}_i||^2 =\textbf{S}_i^T\cdot \textbf{S}_i$.

\vspace{5pt}
\begin{theorem}
\label{theorem:mutual_information}
Let $h_{ji}^I, h_{ij}^I \sim  \mathcal{N}(0,\sigma_h^2)$ and $\textbf{N}_i, \textbf{N}_j \sim  \mathcal{N}(0,\sigma^2) $. Based on sufficient statistics $(\widehat{\textbf{R}}_{ji}, \widehat{\textbf{R}}_{ij})$ at two ends, nodes $i$ and $j$ can generate secret key bits at rate
\begin{eqnarray*}
R_k^{MI} &=&\frac{\ln2}{T_c}\log_2(1+\frac{\sigma_{h}^4N_s^2P^2}{\sigma^4+2\sigma^2\sigma_{h}^2N_sP}),
\end{eqnarray*}
where $P$ denotes the transmission power, $N_s$ denotes the number of samples and $T_c$ is the \textit{coherence time}.
\end{theorem}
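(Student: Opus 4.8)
The plan is to recognize that the two sufficient statistics are noisy observations of a \emph{single} common Gaussian source, to compute the resulting mutual information in closed form, and then to normalize by the coherence time. The starting observation is that channel reciprocity forces $h_{ij}^I = h_{ji}^I =: h^I$, so $\widehat{\textbf{R}}_{ji} = h^I + W_i$ and $\widehat{\textbf{R}}_{ij} = h^I + W_j$ are the \emph{same} random variable $h^I \sim \mathcal{N}(0,\sigma_h^2)$ corrupted by \emph{independent} additive noises $W_i = \frac{\textbf{S}_j^T}{\|\textbf{S}_j\|^2}\textbf{N}_i$ and $W_j = \frac{\textbf{S}_i^T}{\|\textbf{S}_i\|^2}\textbf{N}_j$. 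Since the secret-key rate is upper bounded by the mutual information of the two observations and, as already argued, $I(r_{ij};r_{ji}) = I(\widehat{\textbf{R}}_{ij};\widehat{\textbf{R}}_{ji})$, the whole problem reduces to evaluating $I(\widehat{\textbf{R}}_{ji};\widehat{\textbf{R}}_{ij})$ for this Gaussian ``common source plus independent noise'' model and dividing by $T_c$.

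First I would compute the effective scalar noise variance. Because $\textbf{N}_i$ has i.i.d.\ $\mathcal{N}(0,\sigma^2)$ entries, $W_i$ is zero-mean Gaussian with variance $\sigma^2\|\textbf{S}_j\|^2/\|\textbf{S}_j\|^4 = \sigma^2/\|\textbf{S}_j\|^2$. Evaluating $\|\textbf{S}_j\|^2 = a^2\sum_{m=0}^{N_s-1}\cos^2(w_c m T_s)$ over many carrier periods gives $\|\textbf{S}_j\|^2 \approx a^2 N_s/2 = N_s P$, using the sinusoid power identity $P = a^2/2$; the same holds for $W_j$. Hence both observations carry an effective noise of variance $\sigma_W^2 = \sigma^2/(N_s P)$.

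Next I would evaluate the Gaussian mutual information. The pair $(\widehat{\textbf{R}}_{ji},\widehat{\textbf{R}}_{ij})$ is jointly zero-mean Gaussian with variance $\sigma_h^2+\sigma_W^2$ on each coordinate and covariance $\sigma_h^2$ (the noises being independent of $h^I$ and of each other), so the correlation coefficient is $\rho = \sigma_h^2/(\sigma_h^2+\sigma_W^2)$. Applying $I = -\tfrac12\log_2(1-\rho^2)$ and simplifying $1-\rho^2 = \sigma_W^2(2\sigma_h^2+\sigma_W^2)/(\sigma_h^2+\sigma_W^2)^2$ yields
\[
 I = \tfrac12\log_2\!\Big(1+\tfrac{\sigma_h^4}{\sigma_W^2(2\sigma_h^2+\sigma_W^2)}\Big).
\]
Substituting $\sigma_W^2 = \sigma^2/(N_s P)$ and clearing denominators turns the bracketed term into exactly $\sigma_h^4 N_s^2 P^2/(\sigma^4 + 2\sigma^2\sigma_h^2 N_s P)$, reproducing the argument of the logarithm in the statement.

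Finally, to get the rate I would divide the per-coherence-time information by $T_c$ and account for the in-phase and quadrature components each contributing independently; carrying both components (equivalently, expressing the total information of the complex observation in nats, since $\ln(\cdot)=\ln 2\cdot\log_2(\cdot)$) replaces the $\tfrac12$ prefactor by $\ln 2$, giving $R_k^{MI} = \frac{\ln 2}{T_c}\log_2(1+\cdots)$. I expect the main obstacle to be step two---pinning down the effective noise variance, since it requires the $\cos^2$-sum approximation and the correct identification $\|\textbf{S}_j\|^2 = N_s P$---together with making the I/Q (prefactor) bookkeeping precise so that the constant in front of $\log_2$ comes out as claimed rather than $\tfrac{1}{2T_c}$ or $\tfrac{1}{T_c}$.
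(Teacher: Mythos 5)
Your derivation follows essentially the same route as the paper's: you identify $\widehat{\textbf{R}}_{ji}$ and $\widehat{\textbf{R}}_{ij}$ as one common Gaussian channel coefficient observed through independent additive Gaussian noises, evaluate $\|\textbf{S}\|^2\approx a^2N_s/2=N_sP$ to obtain the effective noise variance $\sigma^2/(N_sP)$, compute the mutual information of the resulting bivariate Gaussian, and divide by $T_c$. Your use of $I=-\tfrac12\log_2(1-\rho^2)$ with $\rho=\sigma_h^2/(\sigma_h^2+\sigma_W^2)$ is algebraically identical to the paper's computation via $H(\widehat{\textbf{R}}_{ij})+H(\widehat{\textbf{R}}_{ji})-H(\widehat{\textbf{R}}_{ij},\widehat{\textbf{R}}_{ji})$ with $\textbf{Cov}(\widehat{\textbf{R}}_{ij},\widehat{\textbf{R}}_{ji})=\sigma_h^2$ and $\det(\Sigma)=2\sigma_h^2\sigma^2/(PN_s)+\sigma^4/(P^2N_s^2)$; both yield exactly the same argument of the logarithm.

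The one step I would not accept is the final prefactor patch. Doubling the information because ``the in-phase and quadrature components each contribute independently'' contradicts the reduction you (and the paper) already made: the paper explicitly argues that estimating the in-phase projection alone fully determines the fading coefficient, and the sufficient statistics $\widehat{\textbf{R}}_{ij},\widehat{\textbf{R}}_{ji}$ are scalars built from that single component, so there is no second independent Gaussian pair to add; likewise, you cannot simultaneously ``convert to nats'' and keep writing $\log_2$ without changing the meaning of the expression. Your honest computation gives $I=\tfrac12\log_2(1+\cdots)$ bits, i.e.\ a rate of $\tfrac{1}{2T_c}\log_2(1+\cdots)$, and in fact the paper's own algebra produces $\tfrac{\ln 2}{2}\log_2(1+\cdots)$ rather than the $\ln 2\,\log_2(1+\cdots)$ appearing in the theorem statement --- the constant you are trying to manufacture is an inconsistency in the statement itself, not something your derivation (or the paper's) supports. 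The substantive content of the bound (the SNR-like term inside the logarithm and its scaling in $N_s$ and $P$) is correctly established by your argument.
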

\begin{proof}
See Appendix \ref{Proof of Theorem 1}.
\end{proof}

\vspace{5pt}
In the above discussions, we focus on two nodes $i$ and $j$ with no relay node. We next analyze the key rate when there are $N$ relay nodes.
If the sampling rate $f_s$ is fixed, the coherence time $T_c$ which contains $2N_s$ samples is divided into $N+2$ shares.
From the nodes $A$ and $B$'s point of view, they each ``sends'' $\frac{2N_s}{N+2}$ samples. Thus, the \textit{cooperative} key generate rate is
\begin{eqnarray}
R_{co}^{MI}=\frac{(N+1)\ln2}{T_c}\log_2[1+\frac{\sigma_{h}^4(\frac{2N_s}{N+2})^2P^2}{\sigma^4+2\sigma^2\sigma_{h}^2(\frac{2N_s}{N+2})P}]
\end{eqnarray}

Although the mutual information between each node pairs decreases due to the reduction of number of samples, the relay nodes help $A$ and $B$ to establish more
key components, this gain becomes more significant when SNR increases or the channel changes very
slowly. We have the following theorem

\begin{theorem}
When there are $N$ relay nodes, the \textit{cooperative gain} is
\begin{eqnarray}
\lim_{P\to\infty}\frac{R_{co}^{MI}}{R_s^{MI}}&=&N+1\\
\lim_{N_s\to\infty}\frac{R_{co}^{MI}}{R_s^{MI}}&=&N+1,
\end{eqnarray}
where $R_s^{MI}=R_k^{MI}$.

\end{theorem}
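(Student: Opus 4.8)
The plan is to form the ratio $R_{co}^{MI}/R_s^{MI}$ directly from Theorem~\ref{theorem:mutual_information} and the displayed expression for $R_{co}^{MI}$, pull out the prefactor $N+1$, and then show that the leftover ratio of logarithms tends to $1$ in each of the two regimes. Writing $\gamma(x)=\frac{\sigma_{h}^4 x^2 P^2}{\sigma^4+2\sigma^2\sigma_{h}^2 x P}$ for the effective SINR as a function of the sample count $x$, we have $R_s^{MI}=\frac{\ln 2}{T_c}\log_2(1+\gamma(N_s))$ and $R_{co}^{MI}=\frac{(N+1)\ln 2}{T_c}\log_2(1+\gamma(M))$ with $M=\frac{2N_s}{N+2}$, so that $\frac{R_{co}^{MI}}{R_s^{MI}}=(N+1)\,\frac{\log_2(1+\gamma(M))}{\log_2(1+\gamma(N_s))}$. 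The whole problem thus reduces to analyzing the single ratio of logarithms.

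For the high-power limit, I would first observe that for any fixed $x>0$, as $P\to\infty$ the quadratic term dominates the denominator, giving $\gamma(x)\sim\frac{\sigma_{h}^2 x}{2\sigma^2}P\to\infty$. Hence $\log_2(1+\gamma(x))=\log_2 P+\log_2\!\big(\frac{\sigma_{h}^2 x}{2\sigma^2}\big)+o(1)$. Since this holds for both $x=M$ and $x=N_s$ with the same leading term $\log_2 P\to\infty$, the bounded additive constants become negligible and $\frac{\log_2(1+\gamma(M))}{\log_2(1+\gamma(N_s))}\to 1$, which yields the first limit $N+1$.

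For the many-samples limit, the key observation is that $M=\frac{2N_s}{N+2}$ also diverges as $N_s\to\infty$ because $N$ is held fixed, so I can reuse the same asymptotic form of $\gamma$, but now treating $P$ as a fixed constant: as $x\to\infty$ one again has $\gamma(x)\sim\frac{\sigma_{h}^2 P}{2\sigma^2}x$, hence $\log_2(1+\gamma(x))=\log_2 x+\log_2\!\big(\frac{\sigma_{h}^2 P}{2\sigma^2}\big)+o(1)$. Using $\log_2 M=\log_2 N_s+\log_2\frac{2}{N+2}$, the ratio becomes $\frac{\log_2 N_s+\log_2\frac{2}{N+2}+C}{\log_2 N_s+C}$ with the constant $C=\log_2\frac{\sigma_{h}^2 P}{2\sigma^2}$, whose numerator and denominator both grow like $\log_2 N_s\to\infty$; the ratio therefore tends to $1$ and the cooperative gain again equals $N+1$.

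The computations here are elementary, so I do not anticipate a genuine obstacle; the only point requiring care is the bookkeeping of the lower-order terms. Specifically, one must confirm that in each regime the divergent leading term ($\log_2 P$ in the first case, $\log_2 N_s$ in the second) is common to both numerator and denominator, so that the bounded difference terms wash out in the limit rather than surviving. A clean way to make this rigorous is to divide numerator and denominator by the common divergent quantity and invoke continuity of division away from zero, or equivalently to apply L'H\^opital's rule to the resulting $\frac{\infty}{\infty}$ form in the driving variable; either route confirms the log ratio $\to 1$ and hence establishes both stated limits.
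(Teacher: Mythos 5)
Your proof is correct. Note that the paper itself states this theorem without any proof at all (there is no proof environment following it and no corresponding appendix --- the appendices labelled ``Proof of Theorem 2'' and ``Proof of Theorem 3'' actually prove the CRB and security theorems), so there is no authorial argument to compare against; your derivation supplies exactly the missing justification. The decomposition into the prefactor $N+1$ times a ratio of logarithms, together with the observation that $\gamma(x)\sim \frac{\sigma_h^2 x P}{2\sigma^2}$ in both regimes so that the two logarithms share the same divergent leading term ($\log_2 P$ or $\log_2 N_s$) while differing only by bounded constants, is the natural and complete way to establish both limits. The only stylistic remark is that L'H\^opital's rule is unnecessary overhead here: once you have $\log_2(1+\gamma(M)) = L + c_1 + o(1)$ and $\log_2(1+\gamma(N_s)) = L + c_2 + o(1)$ with a common $L\to\infty$, dividing through by $L$ immediately gives the ratio tending to $1$.
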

As we can see, the gain of cooperative key generation is similar to the beamforming gain in cooperative networking, which is linear to the number of relay nodes.

\subsection{A More Practical Bound: The Upper Bound on Key Rate from Cramer-Rao bound (CRB)}

In the last subsection, we derive a theoretical upper bound on key rate from mutual information. This bound serves as a universal bound in the sense that it does not depend on the specific method of estimation, and it is not tight in general. Therefore, we next
compute a more practical and tighter bound on key rate from Cramer-Rao bound (CRB) in estimation theory.

In the existing RSS based key generation methods, the signal envelops are sampled and quantized for the calculation of secret bits.  By using the signal envelop, there exists a trade-off between the reduction of the sensitivity of the system to timing error and the loss of variability in the resulting key~\cite{WiTs07}. Different from that, in this paper, we use the uniformly distributed channel phase for key generation to achieve a high level of entropy.
One of the most important properties of Maximum Likelihood estimators (MLE) is that it attains the Cramer-Rao bound at least asymptotically. Similarly, starting from the no-relay case, we have the following theorem:

\vspace{5pt}
\begin{theorem}
\label{theorem:CRB}
When maximum likelihood estimation (MLE) and uniform quantization are used, the expected key rate is upper-bounded by
\begin{eqnarray*}
R_k^{CRB}=\frac{\textbf{P}_{QIA} \log_2 q}{T_c},
\end{eqnarray*}
where $\textbf{P}_{QIA}$ is the average probability of quantization index agreement.
\end{theorem}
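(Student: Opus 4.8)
The plan is to read the statement as a per-round bit-counting identity whose only nontrivial ingredient is the agreement probability $\textbf{P}_{QIA}$, and then to argue that fixing this probability at the CRB-limited estimator makes the resulting rate an upper bound. First I would restrict attention to a single round of the no-relay protocol, which by construction occupies one coherence interval $T_c$. In that round each of the two nodes forms one ML phase estimate, $\hat\theta_{ij}$ and $\hat\theta_{ji}$, of the common channel phase $\theta\in\mathcal{U}[0,2\pi]$, and maps it through the uniform quantizer $Q(\cdot)$ into one of $q$ cells, encoding $\log_2 q$ candidate bits. The bits actually shared in the round are exactly those $\log_2 q$ bits when the two indices coincide, i.e.\ when $Q(\hat\theta_{ij})=Q(\hat\theta_{ji})$. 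Writing $\textbf{P}_{QIA}=\Pr[\,Q(\hat\theta_{ij})=Q(\hat\theta_{ji})\,]$ for the average probability of this event, the expected number of reliably shared bits per round is $\textbf{P}_{QIA}\log_2 q$, and dividing by the round duration $T_c$ gives the claimed $R_k^{CRB}=\textbf{P}_{QIA}\log_2 q / T_c$.

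Second, to justify the ``upper bound'' direction I would invoke the two facts already recorded in the text: the MLE attains the CRB at least asymptotically, and the CRB lower-bounds the variance of any estimator of a deterministic parameter. Modeling each estimate as the true phase perturbed by a zero-mean error whose variance I set equal to the single-tone phase CRB (from Rife--Boorstyn), $\textbf{P}_{QIA}$ is monotonically non-increasing in that error variance, since a larger spread pushes more probability mass across the cell boundaries. Substituting the smallest admissible variance therefore maximizes $\textbf{P}_{QIA}$, and because the realized MLE has variance at least the CRB, no MLE-based scheme can achieve a larger agreement probability or a larger rate. Treating each agreement as a full $\log_2 q$ secret bits moreover ignores the entropy sacrificed during reconciliation and privacy amplification, which can only lower the true secret rate; both effects make $R_k^{CRB}$ an upper bound rather than an achievable value.

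The main obstacle I anticipate is not the rate bookkeeping but the explicit evaluation of $\textbf{P}_{QIA}$ in terms of the CRB. This will require (i) importing the single-tone phase CRB, which is coupled to the frequency-estimation error through the Fisher information matrix, so I would first confirm that its phase-variance entry carries the expected $1/(\mathrm{SNR}\cdot N_s)$-type scaling; (ii) treating $\hat\theta_{ij}$ and $\hat\theta_{ji}$ as conditionally independent Gaussian perturbations of $\theta$ and averaging the same-cell probability over $\theta\sim\mathcal{U}[0,2\pi]$; and (iii) handling the circular geometry carefully, both the wrap-around at $0\equiv 2\pi$ and the location of $\theta$ relative to the two bin edges of width $2\pi/q$, because the same-cell probability depends on how close the true phase sits to a boundary. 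Carrying out the resulting double integral over the two error variables and the bin offset is the delicate step; I would expect either to reduce it to a sum of Gaussian tail ($Q$-function) terms or to bound it in the small-error, high-SNR regime, which is exactly where the asymptotic optimality of the MLE makes the CRB substitution legitimate.
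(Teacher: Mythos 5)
Your proposal follows essentially the same route as the paper's proof: model each node's ML phase estimate as the true phase plus a zero-mean Gaussian error whose variance is set to the Rife--Boorstyn single-tone CRB (scaling as $4/(\mathrm{SNR}\cdot N_s)$), compute the quantization-index agreement probability as a sum of squared per-sector Gaussian integrals averaged over $\theta\sim\mathcal{U}[0,2\pi]$ (exploiting the $2\pi/q$ periodicity and keeping the dominant same-sector term), and then count $\log_2 q$ shared bits per agreement over one coherence interval $T_c$. Your added monotonicity argument --- that $\mathbf{P}_{QIA}$ can only decrease as the error variance grows above the CRB, which is what makes the expression an upper bound rather than merely an estimate --- is a justification the paper leaves implicit, but it does not change the substance of the derivation.
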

\begin{proof}
See Appendix \ref{Proof of Theorem 2}.
\end{proof}

When there are $N$ relay nodes, the number of samples at each node is $N_s^{co}=\frac{2N_s}{N+2}$. We substitute $N_s$ for $N_s^{co}$ in Eq.~(\ref{CRB_theta}) and obtain the new CRB for $\tilde{\theta}$. This bound is used to calculate $\textbf{P}_{QIA}^{co}$. Thus, the expected key rate for cooperative key generation becomes
\begin{eqnarray}
R_{co}^{CRB}&=&\frac{(N+1)\textbf{P}_{QIA}^{co} \log_2 q}{T_c}.
\end{eqnarray}

\begin{figure}[!t]
\begin{center}
\includegraphics [width=9.0cm]{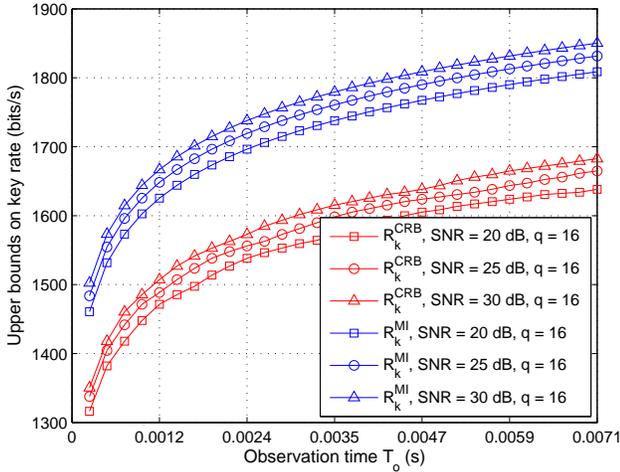}
\end{center}
\caption{Key rate versus observation time $T_o$ under different SNRs.}
\label{observation_timeVSkey_rate}
\end{figure}

It is easy to see that as $q$ increases, node $i$ and $j$ could generate a longer bit vector during the same \textit{coherence time} $T_c$. However, due to estimation errors the probability of generating the same bit vector becomes less.
We can derive the maximum key agreement rate when $q$ satisfies
\begin{eqnarray}
\frac{\partial R_{co}^{CRB}}{\partial q}=0.
\label{maximum_q}
\end{eqnarray}

From the above discussion, we conclude that there exists an optimal $q$ at which maximum key rate can be achieved. We demonstrate how key rate changes as a function of $q$ through simulations in Section~\ref{simulation study}.

\subsection{Numerical Illustration on Theoretical Upper Bounds}
 Assume \textit{coherence time} $T_c=$ 14$ms$.
The example in Fig.~\ref{observation_timeVSkey_rate} presents the two upper bounds on key rate between two nodes (\textit{i.e.}, no relay) as the observation time $T_o$ increases. The results show that the upper bound derived from mutual information serves as the universal upper bound on key rate. As expected, with a fixed number of quantization levels, the increase of SNR or $T_o$ leads to the increase of key rate. Since there are only two nodes, the observation time for each node can be up to 7ms. When $T_o$ changes from 0 to 2.4ms, key rate increases rapidly, and it increases almost linearly as a function of $T_o$ after 2.4ms. Hence, a less observation time can be properly chosen to still maintain an acceptable level of key rate. On the other hand, while the maximum $T_o$ is constrained by $T_c/2$,  one can further enhance the key rate by increasing SNR.

 Fig.~\ref{number_of_relaysVSkey_rate}  plots the upper bounds on key rate when the number of relays $N$ increases. The close match of the bound from mutual information and the bound from CRB before $N=500$ shows that, the CRB can be used to efficiently approach the universal upper bound when the nodes use ML phase estimation. Recall that as $N$ increases, the observation time $T_o$ for each node decreases because the whole \textit{coherence time} are equally distributed to the keying nodes and relay nodes. Due to the fact that the decrease of $T_o$ causes more estimation errors, there exists a threshold on key rate.
This can be clearly observed from the results: the bound based on CRB gradually achieves the maximum and decreases after $N=2500$.  For the sake of clearly illustrating the inflection point on the bound curve from CRB, we limit the range of $N$ in the figure. In fact, there also exists a inflection point on the bound curve from mutual information when $N$ goes to infinity.


\begin{figure}[!t]
\begin{center}
\includegraphics [width=8.5cm]{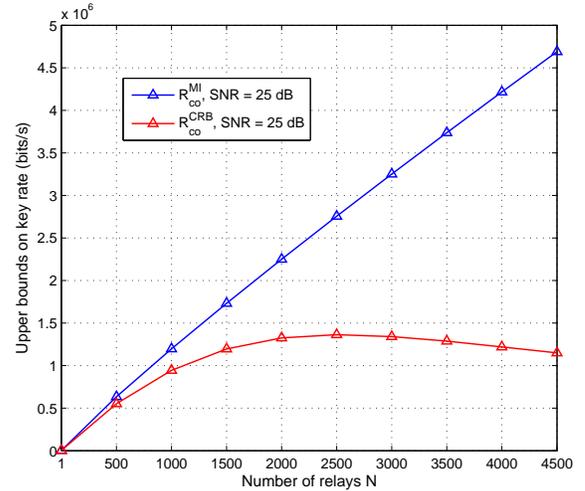}
\end{center}
\caption{key rate versus the number of relays $N$. Note that the observation time $T_o$ is not fixed, \textit{i.e.}, $T_o$ decreases as $N$ increases.}
\label{number_of_relaysVSkey_rate}
\end{figure}

\begin{table}\centering
\caption{Simulation Configuration}
\label{sim_config}
{%
\begin{tabular}{|l|l|}
\hline
Carrier frequency $f_c$   & 900 MHz\\\hline
Sampling frequency $f_s$ & 2.7 GHz\\\hline
Average moving speed $v$     & 10 m/s\\\hline
Coherence time $T_c$  & 14 ms\\\hline
Node distance $d$   & 2 m -- 10 m\\\hline
Delay spread $\nu$    & 1.2$ \mu$s\\\hline
\end{tabular}}
\end{table}%

\emph{\textbf{Discussion}}. In our protocol, the keying nodes rely on a common time reference  to generate \textit{absolute} phase estimates.
If there exists no common time reference among the nodes, each node has to count on its own local time obtained from its local oscillator.
 This implies that the phase estimate generated by each node will has an ``unknown'' offset associated with the node itself,
which prevents the key generation protocol from working correctly. As a future direction, it is worthwhile to extend our protocol to overcome the effect of unknown phase offsets and allow key generation in the unsynchronized case.

We are also going to build a simple prototype to validate the effectiveness of the protocol.
The nodes can be implemented by TMS320C6713 DSKs boards, and the primary beacons can be generated and sent out by a function generator, \textit{e.g.}, HP33120A.
In the implementation, we can use phase-locked loops (PLLs) to realize phase and frequency estimation functions for improving the efficiency.
Since each node transmits a periodic extension of a beacon received in a previous timeslot, the phase and
frequency estimation functions during the synchronization timeslots can be realized by using phase-locked
loops (PLLs) with holdover circuits, \textit{i.e.},  the PLLs are required for each node to store its local phase and frequency estimates during protocol execution.

\section{Simulation Studies}\label{simulation study}

\subsection{Key Rate and Bit Error Probability}
This section presents simulation results of the cooperative key generation protocol in  multipath fading channels.
 In our simulation, we sample the beacon signal with sampling rate $f_s=3f_c$, where $f_c=900$ MHz is the carrier frequency of the single-tone signal.
  In a mobile scenario, we assume the moving speed $v=10$m/s. Thus, the Doppler frequency shift is $f_d=\frac{v}{\lambda}=30$Hz, which results in a \textit{coherence time} $T_c=\frac{0.423}{f_d}=14$ms.
  Assume  $\nu$ is the delay spread with a typical value  $1.2\times10^{-6}$s and the distance  $d$ between nodes changes from 2m to 10m. Thus, the random propagation delay $\tau=\frac{d}{c}=6.67$ns $\sim 33.3$ns.
  We choose $T_o$ much larger than the delay spread $\nu$ so that steady-state response can be achieved. The simulation settings are summarized in Table~\ref{sim_config}.
  Two different methods are used here to
estimate the variance of the phase estimation error: (i) full ML estimation
 and (ii) approximate analytical predictions using CRB.

\begin{figure}[!t]
\begin{center}
\includegraphics [width=9.0cm]{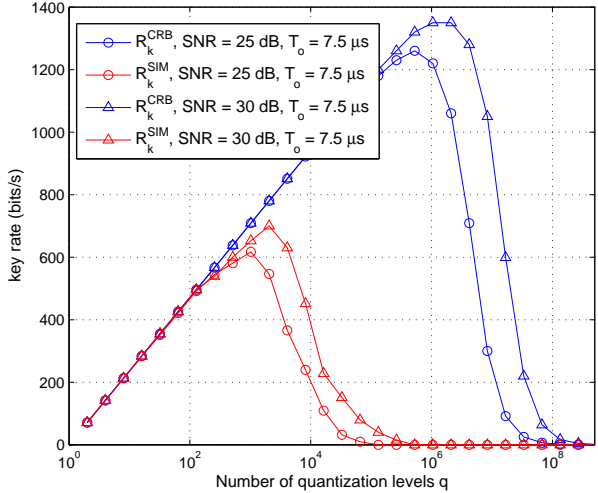}
\end{center}
\caption{Key rate versus the number of quantization levels $q$.}
\label{quantization_levelVSkey_rate}
\end{figure}

\begin{figure}[!t]
\begin{center}
\includegraphics [width=9.0cm]{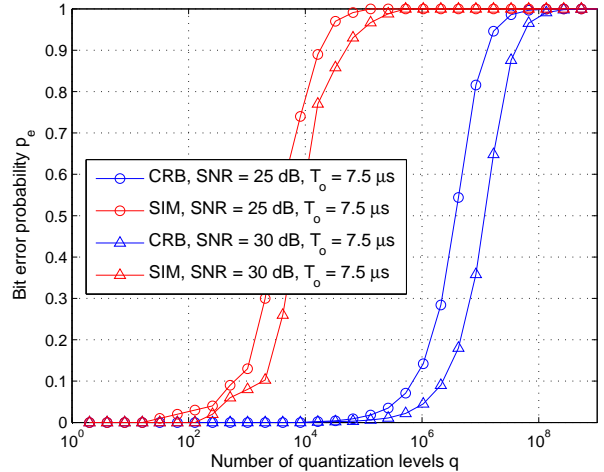}
\end{center}
\caption{Bit probability error $p_e$ versus the number of quantization levels $q$.}
\label{quantization_levelVSPe}
\end{figure}

\begin{figure}[!t]
\begin{center}
\includegraphics [width=9.0cm]{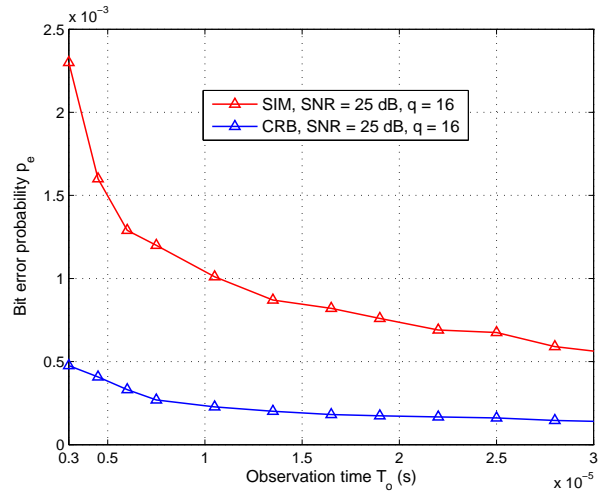}
\end{center}
\caption{Bit error probability $p_e$ versus observation time $T_o$.}
\label{observation_timeVPe}
\end{figure}

\begin{figure}[!t]
\begin{center}
\includegraphics [width=9.0cm]{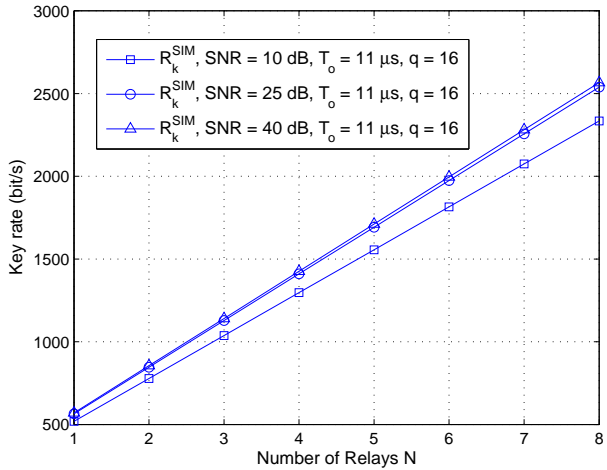}
\end{center}
\caption{Key rate versus the number of relays $N$.}
\label{number_of_relaysVSkey_rate_sim}
\end{figure}

The first example considers the effect of quantization level $q$ on key rate. Fig.~\ref{quantization_levelVSkey_rate} plots the key rate versus $q$ given SNR $=$25dB and $T_o=$7.5$\mu$s using both the CRB analytical predictions and simulations. The results show two regimes of operation. In the small-quantization level regime, the effect of $\log_2q$ dominates the key rate. In this regime, the probability that two estimates fall into the same interval $\textbf{P}_{QIA}$ is very high. Thus, the increase of $q$ leads to the increase of key rate. According to Eq.~(\ref{maximum_q}), when $q$ begins to exceed a threshold, the key rate begins to decrease and enters into the large-quantization level regime. In this regime, the key rate decreases quickly as $q$ further increases. This is due to the fact that the estimation errors dominate the performance as the length of each interval $\frac{2\pi}{q}$ decreases, \textit{i.e.}, $\textbf{P}_{QIA}$ is very sensitive to the estimation errors when the length of interval is small. As might be expected, the CRB can be used to efficiently predict the performance when $q$ is relatively small, \textit{e.g.}, $q<10^3$ in this setting. Since CRB is a lower bound on the variance of the estimation error, it takes a much larger $q$ to reach the inflexion point compared to the simulation results. The above result intuitively suggests that an optimal $q$ can be chosen to maximize the key rate.
To evaluate the BER performance,  Fig.~\ref{quantization_levelVSPe} plots the bit error probability between two nodes as a function of $q$. The results show that, with a fixed $T_o=7.5\mu$s, $p_e$ can be maintained at a very low level if $q<100$. We can use Gray codes (one bit of error is introduced between adjacent sectors) to encode the quantization indices to reduce $p_e$. Also note that in these results, the coherence time is not fully exploited (\textit{i.e.}, the observation time $T_o=7.5 ~\mu$s$\ll T_c$), so one can also reduce $p_e$ so as to increase key rate by setting a larger $T_o$.


Fig.~\ref{observation_timeVPe} plots bit error probability $p_e$ as a function of observation time $T_o$ under SNR$=25$ dB and $q=16$. The results show that the increase of $T_o$ is equivalent to the increase of SNR, which results in a close match of simulation results and CRB.
Fig.~\ref{number_of_relaysVSkey_rate_sim} plots the key rate of the cooperative key generation protocol as the number of relay nodes increases when the quantization levels is fixed at $q=16$.
We choose $T_o=11~\mu$s to maintain a high level of estimation accuracy. The results show that key rate increases linearly as a function of $N$, which confirms our previous analysis that the gain of cooperative key generation scales with the number of relays. As a final point on the results, we note that the further increase of SNR (\textit{e.g.}, from 25 dB to 40 dB) does not help much to improve the performance. This is because the estimation accuracy is already high enough when choosing a short $q$ and a reasonable value of $T_o$.

%
%
%
%
%

\subsection{Key Randomness and The Effect of Mobility}
As we discussed above, the proposed cooperative key generation scheme employs the inherent randomness of uniformly distributed channel phases in multipath narrowband fading channels.
 We employ a widely used randomness test suite NIST to verify the randomness of the secret-bit generated from our simulation~\cite{NIST}. To pass the test, all p-values must be greater than $0.01$.
 In the test, we randomly select 10 bit sequences generated from our simulation and compute their p-values for 8 tests.
 The results in Table~\ref{NIST} show that the average entropy of our generated bit sequences is very close to a truly random sequence.


\begin{table}[!t]
\centering
\begin{tabular}{l c}
\toprule
$\bold{TEST}$ & $\bold{P-value}$ \\
\hline \hline
 & Avg \\
 \hline
DFT & 0.6039 \\
\hline
Lempel Ziv Comp. & 0.4453  \\
\hline
Monobit Freq. & 0.5547  \\
\hline
Runs & 0.4045  \\
\hline
Approximate Entropy & 0.5869 \\
\hline
Cumu. Sums (Forward) & 0.5951 \\
\hline
Cumu. Sums (Reverse) & 0.5887  \\
\hline
Block Frequency & 0.5732  \\
\hline
Serial & 0.5732, 0.5091  \\

\bottomrule
\end{tabular}
\caption{Results of NIST.}
\label{NIST}
\end{table}

%

\section{Security Analysis}
In this section, we provide a security discussion for the proposed cooperative key generation scheme. We focus on both practical and analytical aspects.
The security of the proposed key generation scheme is guaranteed based on the assumption that the adversary is not located near the legitimate parties, \textit{i.e.}, $A$, $B$ and other relay nodes. The is due to the spatial decorrelation fact: since the signal decorrelates over a distance of approximately one half length~\cite{Go05}, it is almost impossible for an adversary which is
located at a different place with the transceivers to obtain the identical channel response for key
generation.
That is, an entity which is at least $\lambda/2$ away from the network nodes experiences fading channels to the nodes are statistically independent of the channels between the communicating nodes. As an example, consider a wireless system with
900MHz carrier frequency.
If the adversary is more than 16cm away from the communicating nodes, it experiences independent channel variations such that no useful information is revealed to it.
By passively observing the signals transmitted between legitimate nodes, it has been empirically shown in~\cite{YeMa10} that the eavesdropper cannot obtain any significant information about the signals received at legitimate nodes.

Another key point regarding the security aspect is that we rely on the uniformity of the channel phase for extracting secret key bits in the narrowband fading channels. As discussed in Section~\ref{Narrowband and Wideband Channels}, the complex lowpass equivalent signal for $r(t)$ can be written as $r_{LP}=r_I(t)+jr_Q(t)$, where the phase of $r(t)$ is $\theta=\arctan(\frac{r_Q(t)}{r_I(t)})$. For uncorrelated Gaussian random variables $r_I(t)$ and $r_Q(t)$, it can be shown that $\theta$ is uniformly distributed over $[0,2\pi]$~\cite{Go05}. Consequently, our proposed PHY based key generation algorithm is best suited for the narrowband fading channels, where $r(t)$ has a Rayleigh-distributed amplitude and uniform phase.
We have the following theorem:



\vspace{5pt}
\begin{theorem}
\label{theorem:security}
 The cooperative key generation scheme is secure, \textit{i.e.}, the resulting secret key is effectively concealed from the eavesdropper observing the public information:
 \begin{eqnarray*}
 \frac{1}{N+1}I(M_0, M_1, M_2, \ldots, M_N;K_{AB}, K_{11}, K_{21}, \ldots, K_{N1})
\leq \epsilon \notag
\end{eqnarray*}
\end{theorem}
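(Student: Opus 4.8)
The plan is to show that each public transmission functions as a one-time-pad mask that statistically hides the key component retained for the final key, and then to aggregate the per-link bounds using the mutual independence of distinct wireless links. First I would fix the two structural facts on which everything rests. By the narrowband fading model of Section~\ref{Narrowband and Wideband Channels}, the channel phase $\theta$ is uniform on $[0,2\pi]$, so after the uniform $q$-level quantizer $Q(\cdot)$ each quantized symbol is (near-)uniform on $\{1,\ldots,q\}$, and hence every component $K_{AB}, K_{j1}, K_{j2}$ is close to uniform in statistical distance. By the spatial-decorrelation property (signals decorrelate beyond $\lambda/2$), the $A$-$B$, $A$-$R_j$, and $B$-$R_j$ channels are mutually statistically independent across distinct node pairs, so the underlying phases---and therefore the key components---are mutually independent; in particular the pairs $(K_{j1},K_{j2})$ for different $j$, together with $K_{AB}$, form mutually independent blocks.

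Next I would analyze the relay transmissions. Each relay publishes $M_j=K_{j1}\oplus K_{j2}$, and since $K_{j2}$ is (near-)uniform and independent of $K_{j1}$, this is a one-time pad: $P(M_j=m\mid K_{j1}=k)=P(K_{j2}=m\oplus k)$ is (almost) independent of $k$, so $I(M_j;K_{j1})\approx 0$. Because $M_j$ is a function of only its own independent block $(K_{j1},K_{j2})$ while the reconciliation sketch $M_0=K\oplus c$ involves only the $K_{AB}$ block, mutual information is additive over these independent blocks and
\[ I(M_0,\ldots,M_N;K_{AB},K_{11},\ldots,K_{N1})=I(M_0;K_{AB})+\sum_{j=1}^{N}I(M_j;K_{j1}). \]

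Finally I would bound the two residual leakages in this decomposition. The reconciliation sketch $M_0$ built from a $[n,k,2t+1]_{2}$ code discloses at most $n-k$ bits about the reconciled key, but the subsequent privacy-amplification step (leftover-hash lemma) compresses the key so that, conditioned on all public data, its statistical distance from uniform is at most some $\epsilon'$; converting this distance into mutual information bounds $I(M_0;K_{AB})$. Adding the small per-link residuals left by the imperfect uniformity of the quantized phase gives a total at most $(N+1)\epsilon$, and dividing by $N+1$ yields the claimed per-component bound.

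The main obstacle will be the passage from ``near-uniform'' to an explicit $\epsilon$. The one-time-pad identity gives \emph{exactly} $I(M_j;K_{j1})=0$ only when the masks $K_{j2}$ are perfectly uniform and independent, whereas quantizing a continuous uniform phase and the correlated noise of successive MLE estimates leave a small deviation from uniform. Quantifying this deviation in statistical distance and propagating it---through the block decomposition above and the privacy-amplification bound---into a single constant $\epsilon$ is the delicate step, and it is precisely where the uniformity of the Rayleigh-fading phase and the leftover-hash lemma carry the argument.
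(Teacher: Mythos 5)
Your proposal is correct and follows the same architecture as the paper's proof: mutual independence of the pairwise channels yields the additive decomposition $I(M_0,\ldots,M_N;K_{AB},K_{11},\ldots,K_{N1})=I(M_0;K_{AB})+\sum_{j}I(M_j;K_{j1})$, and the one-time-pad identity $I(K_{j1}\oplus K_{j2};K_{j1})=0$ disposes of the relay broadcasts, giving the $(N+1)\epsilon$ total. The two arguments differ in how the public information and the $\epsilon$ budget are accounted. In the paper, $M_0=\{\widehat{\textbf{R}}_{AE},\widehat{\textbf{R}}_{BE}\}$ and each $M_j$ additionally contains $\widehat{\textbf{R}}_{R_jE}$ --- the eavesdropper's \emph{own channel estimates} --- and the per-term $\epsilon$ is charged entirely to the residual correlation between the eavesdropper's channels and the legitimate ones (spatial decorrelation); the reconciliation sketch is not part of the analysis at all. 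You instead take $M_0$ to be the code-offset sketch $K\oplus c$ and charge $\epsilon$ to the non-uniformity of the quantized phase together with the leftover-hash/privacy-amplification bound. Your accounting is arguably the more complete one: the paper silently treats the quantized key components as exactly uniform so that the XOR is a perfect one-time pad, and it never bounds the up-to-$n-k$ bits disclosed by the secure sketch, whereas your closing remark correctly isolates the passage from ``near-uniform'' to an explicit $\epsilon$ as the delicate step. Conversely, the paper's formulation makes explicit that the eavesdropper's passive observations of each link leak at most $\epsilon$; to match the theorem statement, your $M_j$ should also include $\widehat{\textbf{R}}_{AE},\widehat{\textbf{R}}_{BE},\widehat{\textbf{R}}_{R_jE}$ so that the left-hand side genuinely covers everything $E$ observes.
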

\begin{proof}
See Appendix~\ref{Proof of Theorem 3}.

\end{proof}



\section{Conclusion}
In this paper, a novel cooperative key generation protocol was developed to facilitate high-rate key generation in narrowband fading channels, where two keying nodes extract the phase randomness of the fading channel with the aid of relay node(s).
For the first time, we explicitly considered the effect of estimation methods on the extraction of secret key bits from the underlying fading channels and focused on a popular statistical method--maximum likelihood estimation (MLE).
The performance of the cooperative key generation scheme was extensively evaluated theoretically.
We successfully established both a theoretical upper bound on the maximum secret key rate from mutual information of correlated random sources and a more practical upper bound from Cramer-Rao bound (CRB) in estimation theory. Numerical examples and simulation studies were also presented to demonstrate the performance of the cooperative key generation system. The results show that the key rate can be improved by a couple
of orders of magnitude compared to the existing approaches.



%
%

\appendices
\section{Proof of Theorem 1}\label{Proof of Theorem 1}

\begin{proof}
From the above discussion, it is easy to see that $\widehat{\textbf{R}}_{ij} $ is a zero mean Gaussian random variable with variance $\sigma_{h}^2+\frac{\sigma^2}{||\textbf{S}_i||^2}$.
Similarly, $\widehat{\textbf{R}}_{ji}$ is a zero mean Gaussian random variable with variance
$\sigma_{h}^2+\frac{\sigma^2}{||\textbf{S}_j||^2}$. Assume that nodes $i$ and $j$ transmit with power $P=\frac{a^2}{2}$. We have $||\textbf{S}_i||^2=||\textbf{S}_j||^2 \approx PN_s$.
Obviously, $(\widehat{\textbf{R}}_{ij},\widehat{\textbf{R}}_{ji})$ retains all the common randomness in $(\overline{\textbf{R}}_{ij};\overline{\textbf{R}}_{ji})$.
Thus, the mutual information
\begin{eqnarray}
  I(r_{ij}(t);r_{ji}(t)) &=& I(\overline{\textbf{R}}_{ij};\overline{\textbf{R}}_{ji})\\ \notag
                         &=& I(\widehat{\textbf{R}}_{ij};\widehat{\textbf{R}}_{ji}). \notag
\end{eqnarray}
The mutual information $I(\widehat{\textbf{R}}_{ij};\widehat{\textbf{R}}_{ji})$
can be computed as follows
\begin{eqnarray}
I(\widehat{\textbf{R}}_{ij};\widehat{\textbf{R}}_{ji}) =
H(\widehat{\textbf{R}}_{ij})+H(\widehat{\textbf{R}}_{ji})-H(\widehat{\textbf{R}}_{ij},\widehat{\textbf{R}}_{ji})~~~~~~~~~~~~~\\ \notag
=\frac{\ln2}{2}\log_2\left (2\pi e(\sigma_{h}^2+\frac{\sigma^2}{PN_s})\right)~~~~~~~~~~~~~~~~~\\ \notag
~~~~~~~~+\frac{\ln2}{2}\log_2\left (2\pi
e(\sigma_{h}^2+\frac{\sigma^2}{PN_s})\right)-H(\widehat{\textbf{R}}_{ij},\widehat{\textbf{R}}_{ji})\\ \notag
~~~~~~~~~~~~~~~~~~=\ln2\log_2\left (2\pi
e(\sigma_{h}^2+\frac{\sigma^2}{PN_s})\right)-H(\widehat{\textbf{R}}_{ij},\widehat{\textbf{R}}_{ji}).\notag
\end{eqnarray}

Obviously, $\widehat{\textbf{R}}_{ij}$ and $\widehat{\textbf{R}}_{ji}$ form a multivariate normal distribution, thus
\begin{eqnarray}
H(\widehat{\textbf{R}}_{ij},\widehat{\textbf{R}}_{ji})=\frac{\ln2}{2}\log_2[(2\pi e)^2 \textbf{det}(\Sigma)],
\end{eqnarray}
where $\Sigma$ is the covariance matrix of vector $\left[\widehat{\textbf{R}}_{ij},\widehat{\textbf{R}}_{ji} \right]^T$, \emph{i.e.},
\begin{eqnarray}
\Sigma=\begin{bmatrix}\sigma_{h}^2+\frac{\sigma^2}{PN_s} &
\textbf{Cov}(\widehat{\textbf{R}}_{ij},\widehat{\textbf{R}}_{ji})\\ \textbf{Cov}(\widehat{\textbf{R}}_{ij},\widehat{\textbf{R}}_{ji}) &\sigma_{h}^2+\frac{\sigma^2}{PN_s}\end{bmatrix}.
\end{eqnarray}

The covariance of $\widehat{\textbf{R}}_{ij},\widehat{\textbf{R}}_{ji}$ is calculated by
\begin{eqnarray}
\textbf{Cov}(\widehat{\textbf{R}}_{ij},\widehat{\textbf{R}}_{ji})&=&\mathbb{E}(\widehat{\textbf{R}}_{ij}\widehat{\textbf{R}}_{ji})-\mathbb{E}[\widehat{\textbf{R}}_{ij}]\mathbb{E}[\widehat{\textbf{R}}_{ji}]\\ \notag
&=&\mathbb{E}\left [(h_{ij}^I+\frac{\textbf{S}_j^T}{||\textbf{S}_j||^2}\textbf{N}_i)(h_{ji}^I+\frac{\textbf{S}_i^T}{||\textbf{S}_i||^2}\textbf{N}_j)\right]\\ \notag
&=&\mathbb{E}[h_{ij}^2]\\ \notag
&=&\sigma_{h}^2.
\end{eqnarray}
And $\textbf{det}(\Sigma)$ is the determinant of $\Sigma$, which is computed by
\begin{eqnarray}
\textbf{det}(\Sigma) &=&(\sigma_{h}^2+\frac{\sigma^2}{PN_s})^2-\sigma_{h}^4\\ \notag
&=&\frac{2\sigma_{h}^2\sigma^2}{PN_s}+\frac{\sigma^4}{P^2N_s^2}.\notag
\end{eqnarray}

Thus, the mutual information between nodes $i$ and $j$ is
\begin{eqnarray}
I(\widehat{\textbf{R}}_{ij};\widehat{\textbf{R}}_{ji})&=&\ln2\log_2(1+\frac{\sigma_{h}^4N_s^2P^2}{\sigma^4+2\sigma^2\sigma_{h}^2N_sP}).
\end{eqnarray}
Assume the \textit{coherence time} is $T_c$, the maximum key rate is
\begin{eqnarray}
R_k^{MI} &=&\frac{1}{T_c}I(\widehat{\textbf{R}}_{ij};\widehat{\textbf{R}}_{ji})\\ \notag
&=&\frac{\ln2}{T_c}\log_2(1+\frac{\sigma_{h}^4N_s^2P^2}{\sigma^4+2\sigma^2\sigma_{h}^2N_sP}), \notag
\end{eqnarray}
where the superscript $MI$ in $R_k^{MI}$ denotes that the key rate is derived as an upper bound from mutual information.
\end{proof}

\section{Proof of Theorem 2}\label{Proof of Theorem 2}

\begin{proof}
To facilitate analysis, we assume that when the number of samples increases by using larger observation time, the estimation errors converge to
zero-mean Gaussian random variables with variances $\sigma_{\tilde{\theta}}^2$,  which can be lower-bounded by
the Cramer-Rao bounds (CRB)~\cite{RiBo74}.
Fig.~\ref{error_distribution} plots both the distribution of the MLE errors using simulation and the CRB results. The simulation results show that variance of the estimation errors $\sigma_{\sigma_{\tilde{\theta}}^2}^{SIM}=1.6877\cdot 10^{-6}$ is lower-bounded by the CRB $\sigma_{\sigma_{\tilde{\theta}}^2}^{CRB}=1.5616\cdot 10^{-6}$.
 When
estimating the unknown phase of a sampled
sinusoid of amplitude $a$ in white noise with Power Spectral Density (PSD) $\frac{N_0}{2}$, the CRB for the variance of the phase estimate is given as
\begin{eqnarray}
\sigma_{\tilde{\theta}}^2\geq \frac{4f_s\sigma^2(2N_s-1)}{a^2N_s(N_s+1)}\approx \frac{4N_o}{a^2T_o}\approx\frac{4N_0f_s}{a^2N_s},
\label{CRB_theta}
\end{eqnarray}
where $f_s$ is the sampling rate, $N_s$ is the number of samples in
the observation, and $T_o$ is the observation time (\textit{i.e.}, beacon duration) in
second. The approximations can be obtained by assuming that $N_s$ is
large and the fact that $N_s/f_s = T_o=\frac{T_c}{2}$.

\begin{figure}[!t]
\begin{center}
\includegraphics [width=8.0cm]{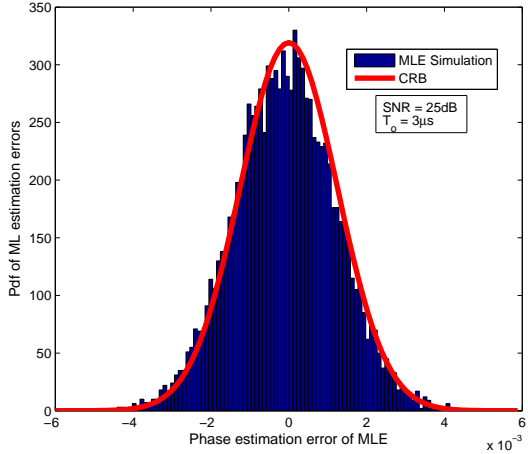}
\end{center}
\caption{The comparison of ML estimation error distribution using simulation and CRB.}
\label{error_distribution}
\end{figure}

Consider Eq. (\ref{received_discrete_signal}), we assume $a_r=a\alpha$ is the received signal strength (we neglect the subscript $i,j$ for simplicity).
The amplitude response of the
fading channel $\alpha$ is Rayleigh distributed, and $\mathbb{E}[\alpha^2]=2\sigma_{h}^2$, then
$a_r^2=2\sigma_{h}^2a^2$. Hence, the CRB bound for the received signal can be expressed as a function of SNR and $N_s$

\begin{eqnarray}
\sigma_{\tilde{\theta}}^2 \geq \frac{4}{ \mathrm{SNR} N_s},
\label{receive_crb}
\end{eqnarray}
where
\begin{eqnarray}
 \mathrm{SNR} =\frac{a_r^2}{2N_0f_s}=\frac{2\sigma_{h}^2 P}{\sigma^2}.
\end{eqnarray}

Suppose $[0,2\pi]$ is divided into $q=2^n$ levels. Now we analyze the probability that nodes $i$ and $j$'s estimations fall into the same interval when performing quantization. Let $\textbf{P}_{QIA}$ denote the average probability of quantization index agreement.
Without loss of generality, assume that $\theta $ falls into  the $i$-th sector $[\frac{2\pi i}{q},\frac{2\pi(i+1)}{q})$ $(i\in\{0,1,\cdots, q-1\})$.
As phase estimation errors are independent and
Gaussian distributed according to the CRB in Eq.(\ref{receive_crb}),
the probability that $\hat{\theta}=\theta+\tilde{\theta} \in [\frac{2\pi i'}{q},\frac{2\pi(i'+1)}{q})$ is (see Fig. \ref{error_distribution_on_inverval})
\begin{eqnarray}
\textbf{P}_{i'}(\theta)&=&\int_{\frac{2\pi i'}{q}}^{\frac{2\pi(i'+1)}{q}}\frac{1}{\sqrt{2\pi}\sigma_{\tilde{\theta}}}e^{-\frac{(x-\theta)^2}{2\sigma_{\tilde{\theta}}^2}} dx,
\label{probability of Pi}
\end{eqnarray}
where $i'\in\{0,1,\cdots, q-1\}$ and $\tilde{\theta}$ is the estimation error.

Thus, $\textbf{P}_{QIA}$ can be computed as $\textbf{P}_{QIA}(\theta)=\sum_{i'=0}^{q-1}\textbf{P}_{i'}(\theta)^2$.
Note that $\textbf{P}_{QIA}(\theta)$ is a function of $\theta$. The value of $\textbf{P}_{QIA}(\theta)$
 goes up when the ``true'' $\theta$ approximates the center of a sector and down when $\theta$ is close to the boundaries of a sector.
In fact, given $\phi\in [0,2\pi]$, $\textbf{P}_{QIA}(\theta)$ is symmetric to the center of a sector and is changing periodically with period $2\pi/q$.
Our simulation results indicate that the variance of phase estimate is much smaller than one.
Thus, given $\theta\in[\frac{2\pi i}{q},\frac{2\pi(i+1)}{q})$, $\textbf{P}_{QIA}(\theta)$ is mainly determined by $\textbf{P}_{i}(\theta)$ ($i'=i$). Based on the above analysis, we can compute the average probability of quantization index  agreement as
\begin{eqnarray}
   \textbf{P}_{QIA} &=& \int_{\frac{2\pi i}{q}}^{\frac{2\pi(i+1)}{q}} \textbf{P}_{QIA}(\theta) \frac{q}{2\pi}d\theta\\
  &\approx& \int_{\frac{2\pi i}{q}}^{\frac{2\pi(i+1)}{q}}\textbf{P}_{i}^2(\theta) \frac{q}{2\pi}d\theta.\notag
\end{eqnarray}

\begin{figure}[!t]
\begin{center}
\includegraphics [width=9.0cm]{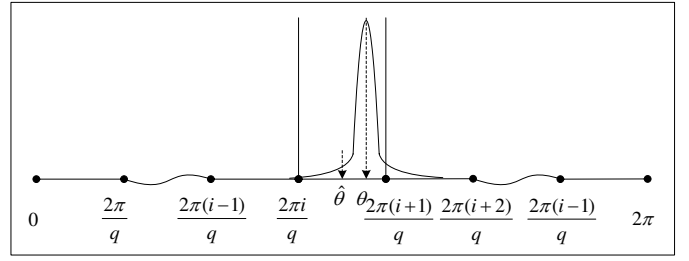}
\end{center}
\caption{An illustration of estimation error distribution on quantization intervals.}
\label{error_distribution_on_inverval}
\end{figure}

%

When nodes $i$ and $j$'s estimates lie in the same interval, they agree on a bit vector of length $\log_2 q$; otherwise they agree on no bit. Hence, the expected key rate is
\begin{eqnarray}
R_k^{CRB}=\frac{\textbf{P}_{QIA} \log_2 q}{T_c}.
\end{eqnarray}

Note that $p_e\approx1-\textbf{P}_{QIA}$ if we assume zero bits are generated when two nodes' estimates fall into different intervals. If gray codes are utilized, $p_e\approx1-\textbf{P}_{QIA}/\log_2q$.
\end{proof}

\section{Proof of Theorem 3}\label{Proof of Theorem 3}
\begin{proof}
Assume $N$ relay nodes are involved with the key establishment. An eavesdropper $E$
monitors all the communications and tries to use these information to find the secret key. Without loss of generality, we assume the key can be established in one round. We have
\begin{eqnarray}
I(\widehat{\textbf{R}}_{AB};\widehat{\textbf{R}}_{BA}) &=& K_{AB}\\
I(\widehat{\textbf{R}}_{AR_j};\widehat{\textbf{R}}_{R_jA}) &=& K_{j1}\\
I(\widehat{\textbf{R}}_{BR_j};\widehat{\textbf{R}}_{R_jB}) &=& K_{j2}
\end{eqnarray}
Suppose $A$ and $B$ always choose $K_{j1}$ as their key component. Let
$M_0=\{\widehat{\textbf{R}}_{AE},\widehat{\textbf{R}}_{BE}\}$. The information $E$ could learn during the
agreement of $K_{j1}$ is $ M_j=\{\widehat{\textbf{R}}_{AE},\widehat{\textbf{R}}_{BE},\widehat{\textbf{R}}_{R_jE},K_{j1}\oplus K_{j2}\}$. Because channels between any two pair of nodes are independent, hence, for any $\epsilon>0$,
we have
\begin{eqnarray}
I(\widehat{\textbf{R}}_{AE},\widehat{\textbf{R}}_{BE};K_{AB})&\leq&\epsilon\\
I(\widehat{\textbf{R}}_{AE},\widehat{\textbf{R}}_{BE},\widehat{\textbf{R}}_{R_jE};K_{j1})&\leq&\epsilon,
\end{eqnarray}
After the relay node $R_j$ broadcasts $K_{j1}\oplus K_{j2}$, $E$ learns $K_{j1}\oplus K_{j2}$. However
\begin{eqnarray}
I(K_{j1}\oplus K_{j2};K_{j1})&=&0.
\end{eqnarray}
It is equivalent to a one-time-pad encryption on $K_{j1}$ with secrete key $K_{j2}$.
Without knowing $K_{j2}$, $E$ could learn nothing from the ciphertext $K_{j1}\oplus K_{j2}$, thus we have
\begin{eqnarray}
I(M_j;K_{j1})=I(\widehat{\textbf{R}}_{AE},\widehat{\textbf{R}}_{BE},\widehat{\textbf{R}}_{R_jE};K_{j1})+\\\notag
I(K_{j1}\oplus K_{j2};K_{j1})\leq \epsilon. \\\notag
\end{eqnarray}

The total information obtained by $E$ is the set $\{M_0, M_1, M_2,\ldots,M_N\}$, whose elements are independent of each other.
On the other side, $A$ and $B$ obtain the key set $\{K_{AB}, K_{11}, K_{21}, \ldots, K_{N1}\}$, whose elements are also independent of each other.
According to the independence of the random variables and the basic properties of mutual information, we have
\begin{eqnarray*}
I(M_0, M_1, M_2, \ldots, M_j;K_{AB}, K_{11}, K_{21}, \ldots, K_{j1})~~~~~~~~~\\
=I(M_0;K_{AB})+\sum_{j=1}^n I(M_j;K_{j1})\leq (N+1)\epsilon~~~~~~~~~\\
\end{eqnarray*}

\end{proof}

\end{document}